\theoremstyle{plain}
\newtheorem{thm}{Theorem}[section]
\newtheorem{lem}[thm]{Lemma}
\newtheorem{cor}[thm]{Corollary}
\theoremstyle{definition}
\newtheorem{exmp}[thm]{Example}
\numberwithin{equation}{section} \errorcontextlines=0
\begin{document}
\title{Superior monogamy and polygamy relations and
estimates of concurrence}%Tighter parameterized polygamy relations} %of multipartite systems}tighter bounds of monogamy and polygamy for multipartite systems
\author{Yue Cao}
\address{College of Arts and Sciences, Guangzhou Maritime University, Guangzhou, Guangdong 510725, China}
\email{434406296@qq.com}
\author{Naihuan Jing*}
\address{Department of Mathematics, North Carolina State University, Raleigh, NC 27695, USA}
\email{jing@ncsu.edu}
\author{Kailash Misra}
\address{Department of Mathematics, North Carolina State University, Raleigh, NC 27695, USA}
\email{misra@ncsu.edu}
\author{Yiling Wang}
\address{Department of Mathematics, North Carolina State University, Raleigh, NC 27695, USA}
\email{ywang327@ncsu.edu}
\subjclass[2010]{Primary: 81P68; Secondary: 81P40, 85}\keywords{Monogamy, Polygamy, Concurrence, Concurrence of assistance (CoA)}
\thanks{Supported in part by Simons Foundation grant nos. 36482 and MP-TSM-00002518}
\thanks{$*$Corresponding author: jing@ncsu.edu}

\begin{abstract}
 It is well known that any well-defined bipartite entanglement measure $\mathcal{E}$ obeys $\gamma$th-monogamy relations Eq. \eqref{e:chap1-ineq1} and assisted measure $\mathcal{E}_{a}$ obeys $\delta$th-polygamy relations Eq. \eqref{e:chap1-ineq2}. Recently, we presented a class of tighter parameterized monogamy relation for the $\alpha$th $(\alpha\geq\gamma)$ power %\cite{CJMW}
 based on Eq. \eqref {e:chap1-ineq1}. This study provides a family of tighter lower (resp. upper) bounds of the monogamy (resp. polygamy) relations in a unified manner. In the first part of the paper, the following three basic problems are focused:
\begin{enumerate}[label= {\bf(\roman*)}]
\item tighter monogamy relation for the $\alpha$th ($0\leq \alpha\leq \gamma$) power of any bipartite entanglement measure $\mathcal{E}$ based on Eq. \eqref{e:chap1-ineq1};
\item tighter polygamy relation for the $\beta$th ($ \beta \geq \delta$) power of any bipartite assisted entanglement measure $\mathcal{E}_{a}$ based on Eq. \eqref{e:chap1-ineq2};
\item tighter polygamy relation for the $\omega$th ($0\leq \omega \leq \delta$) power of any bipartite assisted entanglement measure $\mathcal{E}_{a}$ based on Eq. \eqref{e:chap1-ineq2}.
\end{enumerate}

In the second part, using the tighter polygamy relation for the $\omega$th ($0\leq \omega \leq 2$) power of CoA, we obtain good
estimates or bounds for the $\omega$th ($0\leq \omega \leq 2$) power of concurrence for any $N$-qubit pure states $|\psi\rangle_{AB_{1}\cdots B_{N-1}}$ under the partition $AB_{1}$ and $B_{2}\cdots B_{N-1}$. Detailed examples are given to illustrate that our findings exhibit greater strength across all the region.

\end{abstract}

\maketitle
%\tableofcontents

\section{\textbf{Introduction}}
As one of the essential resources in quantum communication and quantum information processing, quantum entanglement holds great significance \cite{CAF,HHHH,DFSC}. Unlike the classical correlations, a critical property of entanglement is that a quantum system sharing entanglement with one of the subsystems is not free to share entanglement with the rest of the remaining systems. This property is usually called monogamy \cite{ASI}, which characterizes the entanglement distribution in multipartite systems. The monogamy relation has important applications in quantum key distribution,
%\cite{P},
quantum communications \cite{B,CB,GR}, etc.

For a tripartite quantum state $\rho_{ABC}$, entanglement measure $\mathcal{E}$ is called monogamous if
$
\mathcal{E}(\rho_{A|BC})\geq \mathcal{E}(\rho_{AB})+\mathcal{E}(\rho_{AC}),
$
 where $\rho_{AB}$ and $\rho_{AC}$ are the reduced density matrices of $\rho_{ABC}$. In general, entanglement measure $\mathcal{E}$ violates this inequality, while $\mathcal{E}^{\alpha}$ satisfies the monogamy relation for some $\alpha>0.$
 Coffman et al \cite{CKW} first discovered this inequality for the squared concurrence $C^{2}$, and it was generalized to multipartite qubit systems by Osborne and Verstraete \cite{OV}.
Since then, monogamy has been studied for many different situations \cite{G1,JHC,JLLF,ZF1,KPSS, KDS,OF}.

The assisted entanglement is the dual concept of entanglement. As another entanglement constraint in multipartite systems, it has the property of being viewed as a dual form of monogamy, which is called polygamy. For a tripartite quantum state $\rho_{ABC}$, polygamy of entanglement can be described by
$
\mathcal{E}_{a}(\rho_{A|BC})\leq \mathcal{E}_{a}(\rho_{AB})+\mathcal{E}_{a}(\rho_{AC})
$
with a bipartite assisted entanglement $\mathcal{E}_{a}$.
Gour et al \cite{GMS} established the polygamy inequality by using the squared concurrence of assistance $C^{2}_{a}$, which was quickly generalized to multipartite qubit systems \cite{GBS}.
Generalized polygamy inequalities of multipartite entanglement of assistance are also proposed in \cite{K2}.  %\cite{K1, K2}

The $\gamma$th-monogamy $(\gamma> 0)$ relation of the measure $\mathcal{E}$ for any $N$-qubit state $\rho_{AB_{1}\cdots B_{N-1}}$ is defined as
\cite[Thm. 1, Def. 1]{GG}
\begin{align}\label{e:chap1-ineq1}
\mathcal{E}^{\gamma}(\rho_{A|B_{1}\cdots B_{N-1}})\geq \sum_{i=1}^{N-1} \mathcal{E}^{\gamma}(\rho_{AB_{i}}).
\end{align}
where $\rho_{AB_{i}}=\operatorname{Tr}_{B_{1}\cdots B_{i-1}B_{i+1}\cdots B_{N-1}}(\rho_{A B_{1}\cdots B_{N-1}})$  is the reduced density matrix. The exponent $\gamma$ depends on the infimum of all indices satisfying monogamy
relation \eqref{e:chap1-ineq1} of measure $\mathcal{E}$ (eg. If $\mathcal{E}=C$, then $\gamma=2$).

 The $\delta$th-polygamy $(\delta> 0)$ relation of assisted entanglement measure $\mathcal{E}_{a}$ for any $N$-qubit state $\rho_{AB_{1}\cdots B_{N-1}}$ is described as
 \cite[Thm 1, Def. 1]{G}
 %\cite[Thm. 3, Def. 2]{G, GG}
\begin{align}\label{e:chap1-ineq2}
\mathcal{E}_{a}^{\delta}(\rho_{A|B_{1}\cdots B_{N-1}})\leq \sum_{i=1}^{N-1} \mathcal{E}_{a}^{\delta}(\rho_{AB_{i}}).
\end{align}
Here the exponent $\delta$ depends on the supremum of all indices satisfying polygamy
relation \eqref{e:chap1-ineq2} of assisted measure $\mathcal{E}_{a}$ (eg. If $\mathcal{E}_{a}=C_{a}$, then $\delta=2$).

It is worth looking for tighter monogamy and polygamy relations, which can provide a better characterization of the distribution of quantum correlations. Hence the research for tight monogamy and polygamy relations has also attracted widespread attention. One common method to study monogamy and polygamy relations is to bound the binomial function $(1+t)^{x}$ using various smart estimates \cite{ZF2,JF1,ZJZ, YCFW,TZJF,JFQ,ZLJM}. Recently, we presented a family of tighter weighted $\alpha$th-monogamy $(0\leq \alpha\leq\gamma)$  relations \cite{CJW} and tighter parameterized $\alpha$th-monogamy $(\alpha\geq\gamma)$ relations \cite{CJMW} based on Eq. \eqref {e:chap1-ineq1}.

In this study, we propose a new method about  the binomial function $(1+t)^{x}$ by parametric inequalities.
We give a family of tighter monogamy relation for the $\alpha$th ($0\leq \alpha\leq \gamma$) power of any bipartite measure $\mathcal{E}$ based on Eq. \eqref {e:chap1-ineq1}, as well as tighter polygamy relations for the $\beta$th ($ \beta \geq \delta$) power and $\omega$th ($0\leq \omega \leq \delta$) power of any bipartite assisted measure $\mathcal{E}_{a}$ based on Eq. \eqref {e:chap1-ineq2} in a unified manner.

Our study also enables us to estimate the entropy or
concurrence assisted, the second part of the paper will be devoted to give good estimates for the measure. One finds that
our bounds are significant better than some of the known bounds in the literature.

This paper is organized as follows. In Section \ref{s:monogamy} we give tighter monogamy relation for the $\alpha$th ($0\leq \alpha \leq \gamma$) power of any bipartite measure $\mathcal{E}$ based on the mathematical results from \cite{CJW}.
In Section \ref{s:polygamy1} we investigate tighter polygamy relation for the $\beta$th ($\beta \geq \delta$) power of any bipartite assisted measure $\mathcal{E}_{a}$ based on \cite{CJW}. In Section \ref{s:polygamy2} we first prepare the necessary mathematical tools to deal with the approximation and thus give tighter polygamy relation for the $\omega$th ($0\leq \omega \leq \delta$) power of assisted measure $\mathcal{E}_{a}$. Based on this, we obtain tighter lower and upper bounds of the $\omega$th ($0\leq \omega \leq 2$) power of concurrence for any $N$-qubit pure states $|\psi\rangle_{AB_{1}\cdots B_{N-1}}$ under the partition $AB_{1}$ and $B_{2}\cdots B_{N-1}$. Especially, we give three examples to illustrate why our new bounds are stronger than some of the recently found sharper bounds.

\section{\textbf{Tighter $\alpha$th $(0\leq\alpha \leq \gamma)$ power monogamy relations of entanglement measures }}\label{s:monogamy}

Let $\rho=\rho_{AB_{1}\cdots B_{N-1}}$ be an $N$-partite quantum state over the Hilbert space $\mathcal{H}_{A}\bigotimes $ $ \mathcal{H}_{B_{1}} $ $
\bigotimes $ $ \cdots$ $ \bigotimes\mathcal{H}_{B_{N-1}}$. If there is no confusion, we will simply write $\mathcal{E}_{(a)}(\rho_{AB_{i}})=\mathcal{E}_{(a)AB_{i}}$ and
$\mathcal{E}_{(a)}(\rho_{A|B_{1}B_{2}\cdots B_{N-1}})=\mathcal{E}_{(a)A|B_{1}B_{2}\cdots B_{N-1}}$ etc.

\bigskip
In order to obtain tighter monogamy relation for the $\alpha$th ($0\leq \alpha \leq \gamma$) power of entanglement measures $\mathcal{E}$, we first recall
the following lemma: %from \cite{CJW}.

\begin{lem} \cite{CJW} \label{l:chap2-lem1}
Let $a\geq 1$ be a real number. For $t\geq a\geq 1$ and $0 \leq x\leq 1$, we have that
\begin{equation}\label{e:chap2-ineq1}
\begin{aligned}
(1+t)^{x}&\geq\left(1+\frac{a}{s}\right)^{x-1}+\left(1+\frac{s}{a}\right)^{x-1}t^{x} \geq\left(1+a\right)^{x-1}+\left(1+\frac{1}{a}\right)^{x-1}t^{x} \\
&\geq 1+\frac{(1+a)^{x}-1}{a^{x}}t^{x} \geq 1+(2^{x}-1)t^{x}
\end{aligned}
\end{equation}
for any parameter $s \in [\frac{a}{t}, 1]$.
%The four inequalities are in turn in \cite[Eq. (2.5) ]{CJW}, \cite[Lem. 1]{ZLJM},  \cite[Lem. 1]{JFQ}, \cite[Lem. 1]{ZF3}.
\end{lem}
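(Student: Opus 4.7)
The plan is to treat the middle expression in inequality \eqref{e:chap2-ineq1} as a one-parameter family
\[
g(s):=\Bigl(1+\frac{a}{s}\Bigr)^{x-1}+\Bigl(1+\frac{s}{a}\Bigr)^{x-1}t^{x},\qquad s\in[a/t,1],
\]
and to derive inequalities (i) and (ii) simultaneously from its monotonicity in $s$. A direct substitution gives $g(a/t)=(1+t)^{x-1}(1+t)=(1+t)^{x}$ at the left endpoint (using $(1+1/t)^{x-1}t^{x}=t(1+t)^{x-1}$) and $g(1)=(1+a)^{x-1}+(1+1/a)^{x-1}t^{x}$ at the right endpoint, so both (i) and (ii) follow once $g$ is shown to be non-increasing on the interval $[a/t,1]$.

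For this monotonicity step, differentiating yields
\[
g'(s)=(x-1)\left[-\frac{a}{s^{2}}\Bigl(1+\frac{a}{s}\Bigr)^{x-2}+\frac{t^{x}}{a}\Bigl(1+\frac{s}{a}\Bigr)^{x-2}\right].
\]
Since $x-1\le 0$, it suffices to show that the bracket is non-negative. Using the clean identity $(1+a/s)/(1+s/a)=a/s$, the bracket condition collapses to $(st/a)^{x}\ge 1$, which holds precisely because $s\ge a/t$ and $x\ge 0$. I expect this exponent simplification to be the main technical obstacle; the derivative looks forbidding, but the ratio identity makes everything trivialize to the single inequality $st\ge a$, after which the rest of the argument is routine calculus.

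For inequality (iii), I would use $(1+1/a)^{x-1}=(1+a)^{x-1}a^{1-x}$, expand the difference of the two sides, and verify the factorization
\[
\bigl[(1+a)^{x-1}+(1+1/a)^{x-1}t^{x}\bigr]-\Bigl[1+\tfrac{(1+a)^{x}-1}{a^{x}}t^{x}\Bigr]=\bigl[(1+a)^{x-1}-1\bigr]\bigl[1-(t/a)^{x}\bigr].
\]
Both factors are non-positive under the hypotheses ($1+a\ge 1$ together with $x-1\le 0$, and $t\ge a$ together with $x\ge 0$), so the product is non-negative, which gives the claim.

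Finally, inequality (iv) reduces to showing $h(a):=((1+a)^{x}-1)/a^{x}\ge 2^{x}-1=h(1)$ for all $a\ge 1$. Rewriting $h(a)=(1+1/a)^{x}-a^{-x}$ and differentiating, the chain rule gives $h'(a)=xa^{-x-1}\bigl[1-(1+a)^{x-1}\bigr]$, which is non-negative on $a\ge 1$, $0\le x\le 1$ since $(1+a)^{x-1}\le 1$. Hence $h$ is non-decreasing, completing the chain.
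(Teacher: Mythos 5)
Your proof is correct and complete. Note that the paper itself offers no proof of this lemma --- it is imported verbatim from reference \cite{CJW} --- so there is no internal argument to compare against; your write-up stands as a self-contained verification. The three key computations all check out: the endpoint evaluation $g(a/t)=(1+t)^{x}$ via $(1+1/t)^{x-1}t^{x}=t(1+t)^{x-1}$, the collapse of the derivative sign condition to $(st/a)^{x}\ge 1$ via the identity $(1+a/s)=(a/s)(1+s/a)$, and the factorization
\[
\bigl[(1+a)^{x-1}+(1+1/a)^{x-1}t^{x}\bigr]-\Bigl[1+\tfrac{(1+a)^{x}-1}{a^{x}}t^{x}\Bigr]=\bigl[(1+a)^{x-1}-1\bigr]\bigl[1-(t/a)^{x}\bigr],
\]
which I confirmed by direct expansion. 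The final monotonicity claim $h'(a)=xa^{-x-1}\bigl[1-(1+a)^{x-1}\bigr]\ge 0$ is likewise correct. One stylistic remark: deriving the first two inequalities simultaneously from the monotonicity of the single function $g(s)$ on $[a/t,1]$ is an economical organization, since the second inequality is just the comparison $g(s)\ge g(1)$; this makes transparent why the $s$-parameterized bound interpolates between the trivial bound and $(1+t)^{x}$ itself.
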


\bigskip
%Next, we will give some tighter monogamy inequalities of $\alpha$th ($0\leq \alpha \leq 2$) power of concurrence.

\begin{lem}\label{l:chap2-lem2} Let $p_i$ be $N$ positive numbers such that $p_{i}\geq  p_{i+1}(i=1,\cdots,N-1)$, then one has that
\begin{equation}\label{e:chap2-ineq2}
\begin{aligned}
\left(\sum_{i=1}^{N}p_{i}\right)^{x} \geq \sum_{k=1}^{N-1}\left(1+\frac{k-1}{s}\right)^{x-1}\prod_{j=1}^{N-k}\left(1+\frac{s}{N-j}\right)^{x-1}p_{k}^{x}+\left(1+\frac{N-1}{s}\right)^{x-1}p_{N}^{x}
\end{aligned}
\end{equation}
for $0\leq x \leq 1$, where $r\leq s \leq 1$ and $r=max\left\{\frac{hp_{h+1}}{p_{1}+\cdots+p_{h}}|~~h=1,\cdots,N-1\right\}.$ %$r=max\left\{\frac{p_{2}}{p_{1}},\frac{2p_{3}}{p_{1}+p_{2}},\cdots,\frac{(N-1)p_{N}}{p_{1}+\cdots+p_{N-1}}\right\}.$
\end{lem}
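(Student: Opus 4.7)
The natural approach is to induct on $N$, peeling off one term of the sum at each step using Lemma~\ref{l:chap2-lem1}. For the base case $N=2$, I would apply Lemma~\ref{l:chap2-lem1} with $a=1$ and $t=p_1/p_2$; the assumption $p_1\ge p_2$ gives $t\ge a=1$, and the required range $s\in[p_2/p_1,1]$ coincides with $[r,1]$ in this case. Multiplying the resulting bound for $(1+p_1/p_2)^x$ by $p_2^x$ yields \eqref{e:chap2-ineq2} for $N=2$.

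For the inductive step, I would write $\sum_{i=1}^{N}p_i=P+p_N$ with $P=p_1+\cdots+p_{N-1}$, so $(P+p_N)^x=p_N^x(1+P/p_N)^x$, and apply Lemma~\ref{l:chap2-lem1} with $a=N-1$ and $t=P/p_N$. The hypothesis $t\ge a$ becomes $P\ge (N-1)p_N$, which follows from $p_i\ge p_N$ for all $i$. The constraint $s\ge a/t=(N-1)p_N/P$ is exactly the $h=N-1$ entry in the max defining $r$, so the assumption $s\ge r$ guarantees it. This yields the intermediate bound
\begin{align*}
\Bigl(\sum_{i=1}^{N}p_i\Bigr)^x \;\ge\; \Bigl(1+\tfrac{N-1}{s}\Bigr)^{x-1}p_N^x + \Bigl(1+\tfrac{s}{N-1}\Bigr)^{x-1}P^x.
\end{align*}

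Next I would invoke the induction hypothesis on $P^x$; it applies with the same $s$ because the $r$ attached to $(p_1,\dots,p_{N-1})$ is a max over a subset of the indices defining $r$ for $(p_1,\dots,p_N)$, and is therefore no larger. Multiplying that bound by the positive factor $(1+s/(N-1))^{x-1}$ and reindexing $j\mapsto j+1$ inside the product, the extracted prefactor gets absorbed into $\prod_{j=1}^{N-k}(1+s/(N-j))^{x-1}$ for each $k\le N-2$, while the tail term $(1+(N-2)/s)^{x-1}p_{N-1}^x$ becomes, after multiplication by $(1+s/(N-1))^{x-1}$, precisely the $k=N-1$ summand of the target. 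Adding the already separated $p_N^x$ contribution produces \eqref{e:chap2-ineq2}.

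The only real obstacle is the bookkeeping of this telescoping product: one has to check that the $(1+s/(N-1))^{x-1}$ extracted in the peeling step merges cleanly with the product produced by the induction hypothesis, and that the former ``last term'' promotes to the correct $k=N-1$ summand of the new bound. All of the analytic content is already packaged inside Lemma~\ref{l:chap2-lem1}; what makes the induction work is the choice $a=N-1$ in the peel-off, which ties the parameter constraint to exactly the $h=N-1$ entry in the definition of~$r$.
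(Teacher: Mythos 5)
Your proposal is correct and follows essentially the same route as the paper: induction on $N$, peeling off $p_N$ via Lemma~\ref{l:chap2-lem1} with $a=N-1$ and $t=(p_1+\cdots+p_{N-1})/p_N$, then applying the inductive hypothesis to $(p_1+\cdots+p_{N-1})^x$. You merely make explicit the reindexing and the monotonicity of $r$ under dropping $p_N$, which the paper leaves implicit.
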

\begin{proof} We use induction on $N$. The case of $N = 1$ is clear. Assume Eq. \eqref{e:chap2-ineq2} holds for $<N$. For given $p_i$ it is clear that $p_{1}+p_{2}+\cdots+p_{N-1}\geq (N-1)p_{N}$. Using Lemma \ref{l:chap2-lem1} we have that
\begin{align*}
\left(\sum_{i=1}^{N}p_{i}\right)^{x}&=\left(p_{1}+p_{2}+\cdots+p_{N}\right)^{x}=p_{N}^{x}\left(1+\frac{p_{1}+p_{2}+\cdots+p_{N-1}}{p_{N}}\right)^{x}\\
&\geq p_{N}^{x}\left(1+\frac{N-1}{s}\right)^{x-1}+\left(1+\frac{s}{N-1}\right)^{x-1}\left(p_{1}+p_{2}+\cdots+p_{N-1}\right)^{x}\\
\end{align*}
where $\frac{(N-1)p_{N}}{p_{1}+\cdots+p_{N-1}}\leq s\leq 1.$
By the inductive hypothesis, the above is no less than the right-hand side (RHS) of Eq. \eqref{e:chap2-ineq2}.
%\begin{align*}
%& \geq \sum_{k=1}^{N-1}\left(1+\frac{k-1}{s}\right)^{x-1}\prod_{j=1}^{N-k}\left(1+\frac{s}{N-j}\right)^{x-1}p_{k}^{x}+\left(1+\frac{N-1}{s}\right)^{x-1}p_{N}^{x}.
%\end{align*}
\end{proof}

The following result is a direct consequence of
Lemma \ref{l:chap2-lem2}.

\begin{thm} \label{t:chap2-thm1}
Let $\mathcal{E}$ be a bipartite entanglement measure satisfying the $\gamma$th-monogamy \eqref{e:chap1-ineq1} and $\rho_{AB_{1}\cdots B_{N-1}}$ any $N$-qubit quantum state. Arrange $\{\mathcal{E}_{i}=\mathcal{E}_{AB_{i'}}|i=1,\cdots,N-1\}$ in descending order. If $\mathcal{E}^{\gamma}_{i}\geq \mathcal{E}^{\gamma}_{i+1}>0$ for $i=1,\cdots,N-2$, then
\begin{align*}
\mathcal{E}^{\alpha}_{A|B_{1}\cdots B_{N-1}} \geq \sum_{k=1}^{N-2}\left(1+\frac{k-1}{s}\right)^{\frac{\alpha}{\gamma}-1}\prod_{j=2}^{N-k}\left(1+\frac{s}{N-j}\right)^{\frac{\alpha}{\gamma}-1}\mathcal{E}_{k}^{\alpha}+\left(1+\frac{N-2}{s}\right)^{\frac{\alpha}{\gamma}-1}\mathcal{E}_{N-1 }^{\alpha},
\end{align*}
for $0\leq\alpha\leq \gamma$, where $q\leq s \leq 1$ and $q=max\left\{\frac{h\mathcal{E}^{\gamma}_{h+1}}{\mathcal{E}^{\gamma}_{1}+\cdots+\mathcal{E}^{\gamma}_{h}}|~~h=1,2,\cdots,N-2\right\}.$
\end{thm}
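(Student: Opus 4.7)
The plan is to combine the assumed $\gamma$th-monogamy relation \eqref{e:chap1-ineq1} with the elementary inequality of Lemma \ref{l:chap2-lem2}, which is exactly the inductive bound for $(\,\cdot\,)^{x}$ with $0\le x\le 1$.

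First, I would apply the $\gamma$th-monogamy inequality to the $N$-qubit state $\rho_{AB_{1}\cdots B_{N-1}}$ to obtain
\begin{equation*}
\mathcal{E}^{\gamma}_{A|B_{1}\cdots B_{N-1}}\ \ge\ \sum_{i=1}^{N-1}\mathcal{E}^{\gamma}_{AB_{i'}}\ =\ \sum_{i=1}^{N-1}\mathcal{E}_{i}^{\gamma}.
\end{equation*}
Since the arrangement makes $\mathcal{E}_{1}^{\gamma}\ge \mathcal{E}_{2}^{\gamma}\ge \cdots\ge \mathcal{E}_{N-1}^{\gamma}>0$, and because $0\le \alpha/\gamma\le 1$ so that $t\mapsto t^{\alpha/\gamma}$ is monotone increasing, raising the inequality to the power $\alpha/\gamma$ yields
\begin{equation*}
\mathcal{E}^{\alpha}_{A|B_{1}\cdots B_{N-1}}\ \ge\ \Bigl(\sum_{i=1}^{N-1}\mathcal{E}_{i}^{\gamma}\Bigr)^{\alpha/\gamma}.
\end{equation*}

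Next, I would invoke Lemma \ref{l:chap2-lem2} with the substitution $p_{i}=\mathcal{E}_{i}^{\gamma}$, the exponent $x=\alpha/\gamma\in[0,1]$, and the number of summands replaced by $N-1$ (rather than $N$). The ordering hypothesis of the lemma is fulfilled by construction, and the condition $r\le s\le 1$ with $r=\max_h h p_{h+1}/(p_1+\cdots+p_h)$ translates directly to the stated $q\le s\le 1$ with $q=\max_{h}h\mathcal{E}_{h+1}^{\gamma}/(\mathcal{E}_{1}^{\gamma}+\cdots+\mathcal{E}_{h}^{\gamma})$ for $h=1,\ldots,N-2$. Plugging in gives
\begin{equation*}
\Bigl(\sum_{i=1}^{N-1}\mathcal{E}_{i}^{\gamma}\Bigr)^{\alpha/\gamma}\ \ge\ \sum_{k=1}^{N-2}\Bigl(1+\tfrac{k-1}{s}\Bigr)^{\frac{\alpha}{\gamma}-1}\!\!\prod_{j=1}^{N-1-k}\Bigl(1+\tfrac{s}{N-1-j}\Bigr)^{\frac{\alpha}{\gamma}-1}\!\mathcal{E}_{k}^{\alpha}+\Bigl(1+\tfrac{N-2}{s}\Bigr)^{\frac{\alpha}{\gamma}-1}\!\mathcal{E}_{N-1}^{\alpha}.
\end{equation*}

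Finally, a reindexing $j\mapsto j+1$ in the inner product turns $\prod_{j=1}^{N-1-k}(1+s/(N-1-j))^{\alpha/\gamma-1}$ into $\prod_{j=2}^{N-k}(1+s/(N-j))^{\alpha/\gamma-1}$, which matches the form claimed in Theorem \ref{t:chap2-thm1}. Chaining the two inequalities completes the proof. There is no real obstacle here beyond careful bookkeeping: the essential content is packaged in Lemma \ref{l:chap2-lem2}, and the one point that warrants a line of justification is the translation of the admissibility range of $s$ from the generic $p_i$-language into the entanglement-measure language, which is immediate from the substitution $p_i=\mathcal{E}_i^{\gamma}$.
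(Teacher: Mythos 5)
Your proposal is correct and follows exactly the route the paper intends: the paper states Theorem \ref{t:chap2-thm1} as ``a direct consequence of Lemma \ref{l:chap2-lem2}'', i.e.\ apply the $\gamma$th-monogamy relation, raise to the power $\alpha/\gamma\in[0,1]$, and substitute $p_i=\mathcal{E}_i^{\gamma}$ with $N-1$ summands into the lemma. Your index shift $j\mapsto j+1$ and the translation of the admissible range of $s$ are both carried out correctly.
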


{\bf Comparison of the monogamy relations for entanglement measure $\mathcal{E}$}. By Theorem \ref{t:chap2-thm1} and Lemma \ref{l:chap2-lem1}, the following unified monogamy relations of $\alpha$th $(0\leq \alpha \leq \gamma)$ power of $\mathcal{E}$ hold.

\begin{align*}
\mathcal{E}^{\alpha}_{A|B_{1}\cdots B_{N-1}} & \geq \sum_{k=1}^{N-2}\left(1+\frac{k-1}{s}\right)^{\frac{\alpha}{\gamma}-1}\prod_{j=2}^{N-k}\left(1+\frac{s}{N-j}\right)^{\frac{\alpha}{\gamma}-1}\mathcal{E}_{k}^{\alpha}+\left(1+\frac{N-2}{s}\right)^{\frac{\alpha}{\gamma}-1}\mathcal{E}_{N-1 }^{\alpha}\\
&\geq \sum_{k=1}^{N-2}k^{\frac{\alpha}{\gamma}-1}\prod_{j=2}^{N-k}\left(1+\frac{1}{N-j}\right)^{\frac{\alpha}{\gamma}-1}\mathcal{E}_{k}^{\alpha}+\left(N-1\right)^{\frac{\alpha}{\gamma}-1}\mathcal{E}_{N-1 }^{\alpha}\\
&\geq \sum_{k=1}^{N-2}\prod_{j=2}^{N-k}\frac{(N-j+1)^{\frac{\alpha}{\gamma}}-1}{(N-j)^{\frac{\alpha}{\gamma}}}\mathcal{E}_{k}^{\alpha}+\mathcal{E}_{N-1 }^{\alpha}\\
&\geq \sum_{k=1}^{N-1}(2^{\frac{\alpha}{\gamma}}-1)^{N-1-k}\mathcal{E}_{k}^{\alpha}
\end{align*}
where $s, q$ are defined as in Theorem \ref{t:chap2-thm1}.
%where $q\leq s \leq 1$ and $q=max\left\{\frac{h\mathcal{E}^{\gamma}_{h+1}}{\mathcal{E}^{\gamma}_{1}+\cdots+\mathcal{E}^{\gamma}_{h}}|~~h=1,2,\cdots,N-2\right\}.$

\bigskip
Now let's take the concurrence to demonstrate our bounds of the $\alpha$th $(0\leq \alpha \leq \gamma)$ power monogamy relations perform best among recent studies.

Recall that the concurrence of a pure
state $\rho_{AB} \in\mathcal{H}_A \otimes \mathcal{H}_B$ is defined in \cite{ RBCGM,U} by
\begin{equation}\label{e:chap2-ineq3}
C\left(|\psi\rangle_{A B}\right)=\sqrt{2\left[1-\operatorname{Tr}\left(\rho_A^2\right)\right]}=\sqrt{2\left[1-\operatorname{Tr}\left(\rho_B^2\right)\right]},
\end{equation}
where $\rho_A$ (resp. $\rho_B$) is the reduced density matrix by tracing over the subsystem $B$ (resp. $A$).

For a mixed state $\rho_{A B}$, the concurrence and concurrence of assistance (CoA) \cite{YS} are given by
 \begin{equation}\label{e:chap2-ineq4}
 C\left(\rho_{A B}\right)=\min _{\left\{p_i,\left|\psi_i\right\rangle\right\}} \sum_i p_i C\left(\left|\psi_i\right\rangle\right),\quad
C_{a}\left(\rho_{A B}\right)=\max _{\left\{p_i,\left|\psi_i\right\rangle\right\}} \sum_i p_i C\left(\left|\psi_i\right\rangle\right),
\end{equation}
where the minimum/maximum are taken over all possible pure decompositions of $\rho_{A B}=\sum_i p_i\left|\psi_i\right\rangle\left\langle\psi_i\right|$ with $p_i \geqslant 0, \sum_i p_i=1$ and $\left|\psi_i\right\rangle \in \mathcal{H}_A \otimes \mathcal{H}_B$.

\bigskip
The following result is directly derived from Theorem \ref{t:chap2-thm1}.

\begin{cor} \label{c:chap2-cor1} Let $C$ be a bipartite entanglement measure concurrence satisfying the $2$nd-mono\-gamy relation \eqref{e:chap1-ineq1} and $\rho_{AB_{1}\cdots B_{N-1}}$ any $N$-qubit quantum state. Arrange \{$C_{i}=C_{AB_{i^{\prime}}}|i=1,\cdots,N-1\}$ in descending order
such that $C^{2}_{i}\geq C^{2}_{i+1}>0$ for $i=1,\cdots,N-2$, then for $0\leq\alpha\leq 2$ we have
\begin{equation}\label{e:chap2-ineq5}
\begin{aligned}
C^{\alpha}_{A|B_{1}\cdots B_{N-1}} \geq \sum_{k=1}^{N-2}\left(1+\frac{k-1}{s}\right)^{\frac{\alpha}{2}-1}\prod_{j=2}^{N-k}\left(1+\frac{s}{N-j}\right)^{\frac{\alpha}{2}-1}C_{k}^{\alpha}+\left(1+\frac{N-2}{s}\right)^{\frac{\alpha}{2}-1}C_{N-1 }^{\alpha}.
\end{aligned}
\end{equation}
where $q\leq s \leq 1$ and $q=max\left\{\frac{hC_{h+1}^{2}}{C_{1}^{2}+\cdots+C_{h}^{2}}|~~h=1,2,\cdots,N-2\right\}.$
\end{cor}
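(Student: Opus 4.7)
The plan is to observe that this corollary is simply the specialization of Theorem \ref{t:chap2-thm1} to the entanglement measure $\mathcal{E}=C$ with $\gamma=2$. Thus the proof consists in verifying that all hypotheses of Theorem \ref{t:chap2-thm1} are met by the concurrence and then reading off the resulting inequality.

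First I would point out that the concurrence $C$ does satisfy the $\gamma$th-monogamy relation \eqref{e:chap1-ineq1} with $\gamma=2$ for any $N$-qubit state; this is exactly the Coffman--Kundu--Wootters inequality \cite{CKW} (for $N=3$) together with its Osborne--Verstraete extension \cite{OV} to arbitrary $N$. Hence \eqref{e:chap1-ineq1} holds with $\mathcal{E}=C$ and $\gamma=2$, which is the only structural assumption that Theorem \ref{t:chap2-thm1} places on the measure.

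Next I would check that the remaining hypotheses line up verbatim with those of Theorem \ref{t:chap2-thm1}: the quantities $\{C_i=C_{AB_{i'}}\}$ are arranged in descending order with $C_i^2\geq C_{i+1}^2>0$, and the parameter $q$ is defined exactly as in the theorem with $\mathcal{E}_i^\gamma$ replaced by $C_i^2$. Substituting $\gamma=2$ into the conclusion of Theorem \ref{t:chap2-thm1} converts each exponent $\frac{\alpha}{\gamma}-1$ into $\frac{\alpha}{2}-1$, which is precisely inequality \eqref{e:chap2-ineq5}. Because this is a direct specialization and no new estimate is required, there is no genuine obstacle in the argument; the entire proof reduces to citing \cite{CKW, OV} and applying Theorem \ref{t:chap2-thm1}.
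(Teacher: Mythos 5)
Your proposal matches the paper exactly: the paper states that Corollary \ref{c:chap2-cor1} "is directly derived from Theorem \ref{t:chap2-thm1}," i.e., it is the specialization $\mathcal{E}=C$, $\gamma=2$, justified by the Coffman--Kundu--Wootters/Osborne--Verstraete monogamy of the squared concurrence. Your verification of the hypotheses and the substitution $\frac{\alpha}{\gamma}-1 \mapsto \frac{\alpha}{2}-1$ is precisely the intended argument.
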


\bigskip
\begin{exmp} Let $\rho=|\Phi\rangle_{AB_{1}B_{2}B_{3}}\langle\Phi|$ be a $4$-qubit entangled decoherence-free state \cite{YL}:
$$|\Phi\rangle_{AB_{1}B_{2}B_{3}}=\frac{\sqrt{2}}{2}|\Phi_{0}\rangle_{AB_{1}B_{2}B_{3}}+\frac{\sqrt{2}}{2}|\Phi_{1}\rangle_{AB_{1}B_{2}B_{3}},$$
where $|\Phi_{0}\rangle_{AB_{1}B_{2}B_{3}}=\frac{1}{2}(|01\rangle-|10\rangle)_{AB_{1}}(|01\rangle-|10\rangle)_{B_{2}B_{3}}$, $|\Phi_{1}\rangle_{AB_{1}B_{2}B_{3}}=\frac{1}{2\sqrt{3}}(2|1100\rangle+2|0011\rangle-|1010\rangle-|1001\rangle-|0101\rangle-|0110\rangle)_{AB_{1}B_{2}B_{3}}$.
Then
$C_{AB_{1}}= 0.9107, C_{AB_{2}}=0.3333, C_{AB_{3}}=0.244$.  Set $s = 0.6$ (since $q\leq s \leq 1$ and $q=max\left\{\frac{C_{AB_{2}}^{2}}{C_{AB_{1}}^{2}}, \frac{2C_{AB_{3}}^{2}}{C_{AB_{1}}^{2}+C_{AB_{2}}^{2}}\right\}=0.5359$).

For $0\leq\alpha\leq 2$, Corollary \ref{c:chap2-cor1}
implies that the RHS of our monogamy relation is:
\begin{align*}
X_1&=\left(1+\frac{2}{s}\right)^{\frac{\alpha}{2}-1}C_{AB_{3}}^{\alpha}+\left(1+\frac{s}{2}\right)^{\frac{\alpha}{2}-1}\left(1+\frac{1}{s}\right)^{\frac{\alpha}{2}-1}C_{AB_{2}}^{\alpha}+\left(1+\frac{s}{2}\right)^{\frac{\alpha}{2}-1}\left(1+\frac{s}{1}\right)^{\frac{\alpha}{2}-1}C_{AB_{1}}^{\alpha}\\
%&=4.3333^{\frac{\alpha}{2}-1}0.244^{\alpha}+1.3^{\frac{\alpha}{2}-1}2.6667^{\frac{\alpha}{2}-1}0.3333^{\alpha}+1.3^{\frac{\alpha}{2}-1}1.6^{\frac{\alpha}{2}-1}0.9107^{\alpha}.
&=4.3333^{\frac{\alpha}{2}-1}0.244^{\alpha}+3.4667^{\frac{\alpha}{2}-1}0.3333^{\alpha}+2.08^{\frac{\alpha}{2}-1}0.9107^{\alpha}.
\end{align*}

The RHS $X_2$ of the monogamy relation derived from \cite[Lem. 1]{ZLJM} is a special case of our bound at $s=1$:
\begin{align*}
 X_2=3^{\frac{\alpha}{2}-1}\left(C_{AB_{3}}^{\alpha}+C_{AB_{2}}^{\alpha}+C_{AB_{1}}^{\alpha}\right)=3^{\frac{\alpha}{2}-1}\left(0.244^{\alpha}+0.3333^{\alpha}+0.9107^{\alpha}\right).
\end{align*}

The RHS $X_3$ of the monogamy relation from \cite[Lem. 1]{JFQ} is:
\begin{align*}
X_3&=C_{AB_{3}}^{\alpha}+\frac{3^{\frac{\alpha}{2}}-1}{2^{\frac{\alpha}{2}}}C_{AB_{2}}^{\alpha}+\frac{3^{\frac{\alpha}{2}}-1}{2^{\frac{\alpha}{2}}}\left(2^{\frac{\alpha}{2}}-1\right)C_{AB_{1}}^{\alpha}\\
&=0.244^{\alpha}+\frac{3^{\frac{\alpha}{2}}-1}{2^{\frac{\alpha}{2}}}0.3333^{\alpha}+\frac{3^{\frac{\alpha}{2}}-1}{2^{\frac{\alpha}{2}}}\left(2^{\frac{\alpha}{2}}-1\right)0.9107^{\alpha}.
\end{align*}

The lower bound $X_{_4}$ of the monogamy relation obtained from  \cite[Lem. 1]{ZF3} is:
\begin{align*}
 X_{_4}&=C_{AB_{3}}^{\alpha}+\left(2^{\frac{\alpha}{2}}-1\right)C_{AB_{2}}^{\alpha}+\left(2^{\frac{\alpha}{2}}-1\right)^{2}C_{AB_{1}}^{\alpha}\\
 &=0.244^{\alpha}+\left(2^{\frac{\alpha}{2}}-1\right)0.3333^{\alpha}+\left(2^{\frac{\alpha}{2}}-1\right)^{2}0.9107^{\alpha}.
\end{align*}
\begin{figure}[H]
\caption{Comparison of Monogamy Bounds I}
    \centering
    \includegraphics[width=0.75\textwidth]{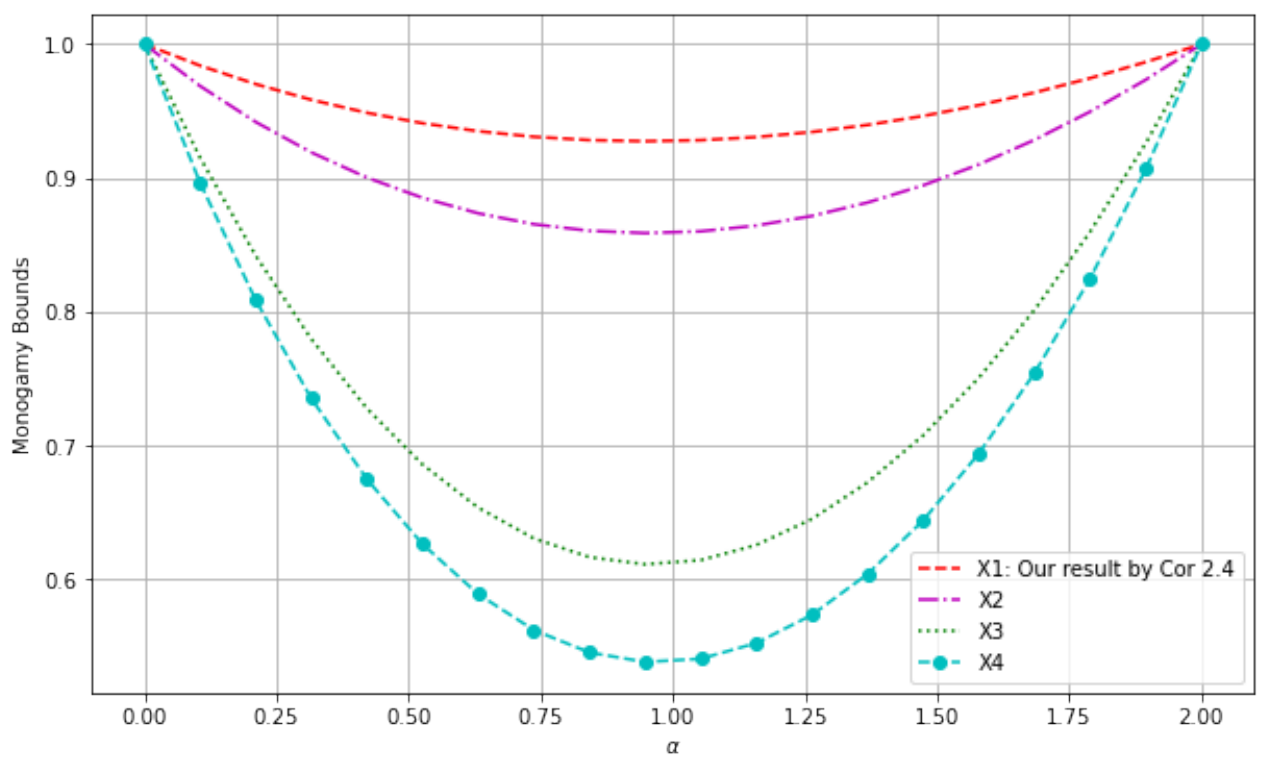}
    \captionsetup{justification=raggedright} 
    \caption*{The graphs of $X_1, X_2, X_3$ and $X_4$ (cf. the legend)
are shown in Figure 1 from top to bottom, which shows
that our bound $X_1$ from Cor. \ref{c:chap2-cor1} is the highest compared with those
from \cite[Lem. 1]{ZLJM}, \cite[Lem. 1]{JFQ} and \cite[Lem. 1]{ZF3} respectively.} 
\end{figure}
\begin{figure}[H]
\caption{Comparison of Monogamy Bounds II}
    \centering
    \includegraphics[width=1.0\textwidth]{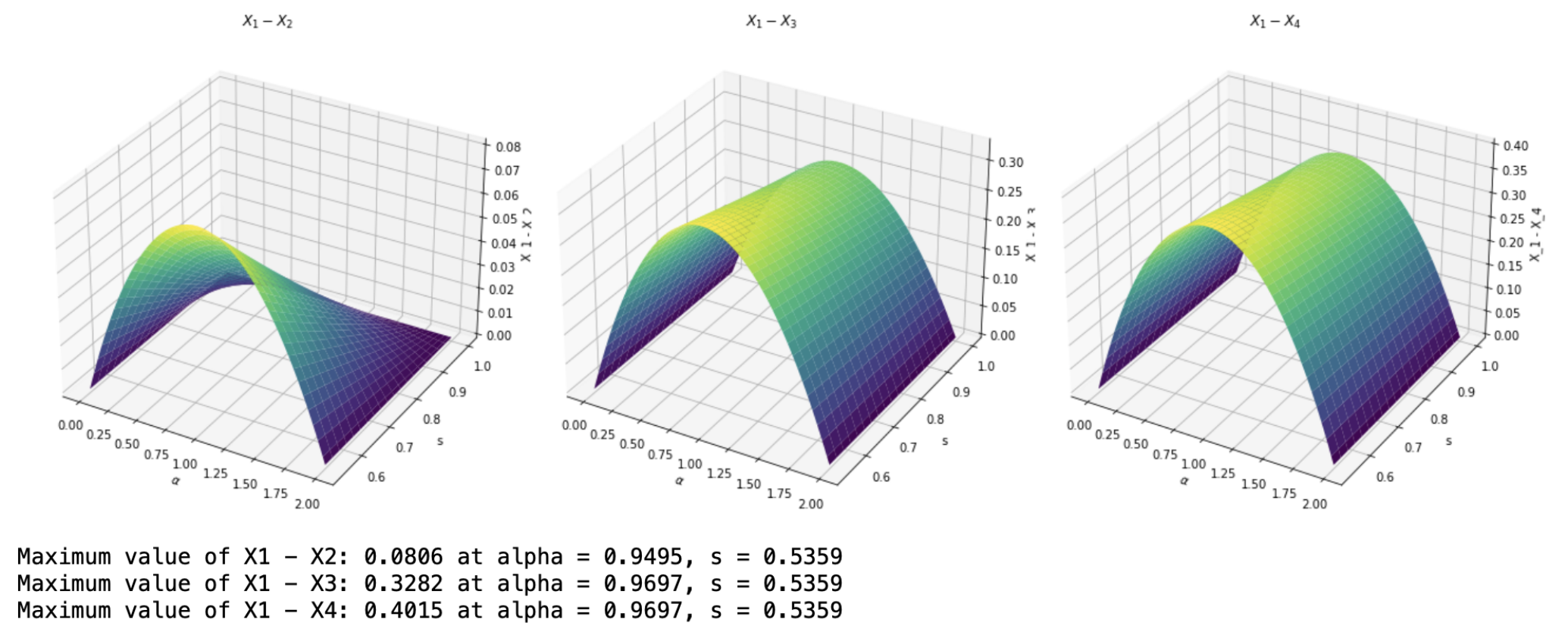}
    \captionsetup{justification=raggedright} 
    \caption*{We have also drawn the differences in Fig. 2, which further confirms that our lower bound $X_1$ is the best. The maxima of the differences are indicated.} 
\end{figure}

%Figure 1: The $x$-axis is $\alpha$. From top to bottom, the dashed red line $X_1$ represents the upper bound from our result $(2.5)$ in Corollary \ref{c:chap2-cor1}, the Dash-dot magenta line $X_2$ represents the result in \cite[Lem. 1]{ZLJM}, the dotted green line $X_3$ represents the result in \cite[Lem. 1]{JFQ}, and the dashed cyan line with circle markers $X_4$ represents the result in \cite[Lem. 1]{ZF3}. The graph shows our lower bound $X_1$ is the strongest.
\end{exmp}

\section{\textbf{Tighter $\beta$th $(\beta \geq \delta)$ power polygamy relations of assisted entanglement }}\label{s:polygamy1}
 In this section, we will present a new class of $\beta$th $(\beta \geq \delta)$ power polygamy relations for any $N$-qubit quantum state in a unified manner. First of all, we need to recall the following lemma from \cite{CJW}.

\begin{lem}\cite{CJW}\label{l:chap3-lem1}
Let $a\geq 1$ be a real number. Then for $t\geq a\geq 1$ and $ x\geq 1$, we have
\begin{equation}\label{e:chap3-ineq1}
\begin{aligned}
(1+t)^{x}&\leq\left(1+\frac{a}{s}\right)^{x-1}+\left(1+\frac{s}{a}\right)^{x-1}t^{x} \leq\left(1+a\right)^{x-1}+\left(1+\frac{1}{a}\right)^{x-1}t^{x}\\
&\leq 1+\frac{(1+a)^{x}-1}{a^{x}}t^{x} \leq 1+(2^{x}-1)t^{x}
\end{aligned}
\end{equation}
for any real $s$ satisfying $\frac{a}{t}\leq s \leq 1$. %The four inequalities are in turn in \cite[Eq. (2.6) ]{CJW}, \cite[Lem. 3]{ZLJM}, \cite{JFQ}.
\end{lem}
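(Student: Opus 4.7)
The lemma is a chain of four inequalities $A_1 \leq A_2 \leq A_3 \leq A_4 \leq A_5$, and my plan is to establish each link separately by short calculus or algebra. First I would note that the joint domain $t \geq a \geq 1$, $s \in [a/t, 1]$ can be reparametrized as $s \in (0, 1]$ together with $t \geq a/s$ (using $s \leq 1 \Rightarrow a/s \geq a$). This is the form in which each step is cleanest.

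For the first link $A_1 \leq A_2$, I would fix $s$ and put $F(t) = (1+t)^x - (1+s/a)^{x-1} t^x$. Then $F'(t) = x\bigl[(1+t)^{x-1} - (1+s/a)^{x-1} t^{x-1}\bigr]$ vanishes exactly when $1+t = (1+s/a)\,t$, i.e.\ at $t = a/s$, the left endpoint of the domain. The key identity $(1 + s/a)(a/s) = 1 + a/s$ then yields $F(a/s) = (1 + a/s)^{x-1}$. Since $(1+s/a)t$ grows strictly faster than $1+t$, we get $F'(t) < 0$ for $t > a/s$, so $F$ attains its maximum at $t = a/s$, giving the desired bound. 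For the second link $A_2 \leq A_3$, I would view the right-hand side as a function $G(s)$ and show it is non-decreasing on $[a/t, 1]$. The neat simplification $(1+s/a)/(1+a/s) = s/a$ makes the sign of $G'(s)$ match the sign of $(st)^x - a^x$, which is $\geq 0$ precisely because $s \geq a/t$; thus $G(s) \leq G(1)$.

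For the third link $A_3 \leq A_4$, substituting $(1+1/a)^{x-1} = (1+a)^{x-1}/a^{x-1}$ and rearranging reduces the claim to $\bigl[(1+a)^{x-1} - 1\bigr]\bigl[t^x/a^x - 1\bigr] \geq 0$, which is immediate from $a \geq 1$, $t \geq a$, and $x \geq 1$. For the fourth link $A_4 \leq A_5$, I would set $h(a) = (2^x - 1)a^x - (1+a)^x + 1$, observe $h(1) = 0$, and check $h'(a) \geq 0$ for $a \geq 1$; this reduces to $(2^x - 1)^{1/(x-1)} \geq 1 + 1/a$, which follows from $2^x - 1 \geq 2^{x-1}$ (so the left side is $\geq 2$) and $1 + 1/a \leq 2$.

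The only subtle point I anticipate is the first link: one must ensure that the unique stationary point of $F$, sitting exactly at the left endpoint $t = a/s$, is a global maximum rather than a minimum. The cleanest way is the direct monotonicity argument above, comparing the linear growth of $1+t$ with the faster linear growth of $(1+s/a)t$; this avoids a more delicate second-derivative computation at the boundary. Once the two cancellation identities $(1+s/a)(a/s) = 1 + a/s$ and $(1+s/a)/(1+a/s) = s/a$ are in hand, the rest of the chain is essentially mechanical.
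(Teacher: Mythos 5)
Your proof is correct. Note first that the paper itself gives no proof of this lemma --- it is imported verbatim from reference [CJW] (as is its companion, Lemma \ref{l:chap2-lem1}) --- so there is no in-paper argument to compare against; what you have supplied is a self-contained verification. Each link checks out: for the first, the identity $(1+s/a)(a/s)=1+a/s$ indeed gives $F(a/s)=(1+a/s)^{x-1}$, and since $1+t<(1+s/a)t$ exactly when $t>a/s$, raising to the power $x-1\geq 0$ shows $F'(t)\leq 0$ on the whole domain $t\geq a/s$, so the stationary point at the left endpoint is the maximum (a useful sanity check: at $s=a/t$ the first inequality is an equality, since $G(a/t)=(1+t)^{x-1}(1+t)=(1+t)^x$). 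For the second, writing $(1+a/s)^{x-2}=(a/s)^{x-2}(1+s/a)^{x-2}$ reduces $G'(s)$ to $(x-1)(1+s/a)^{x-2}\left[t^x/a-a^{x-1}/s^x\right]$, whose sign is that of $(st)^x-a^x\geq 0$, as you say. The third link collapses, after the algebraic identity $(1+a)^x-1-a(1+a)^{x-1}=(1+a)^{x-1}-1$, to the product of two nonnegative factors. The fourth follows from $h(1)=0$ and $h'(a)\geq 0$; your root-taking step implicitly assumes $x>1$, but the case $x=1$ is an identity throughout, so nothing is lost. The only stylistic remark is that the whole chain could also be obtained by specializing the single two-variable function $\phi(s,t)=(1+a/s)^{x-1}+(1+s/a)^{x-1}t^x$ and tracking monotonicity in $s$ and $t$ separately, which is essentially what you did in two stages.
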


Next, we give an analogue of Lemma \ref{l:chap2-lem2}.

\begin{lem}\label{l:chap3-added lem} Let $p_i$ be $N$ positive numbers such that $p_{i}\geq  p_{i+1}(i=1,\cdots,N-1)$, then one has that
\begin{equation}
\begin{aligned}
\left(\sum_{i=1}^{N}p_{i}\right)^{x} \leq \sum_{k=1}^{N-1}\left(1+\frac{k-1}{s}\right)^{x-1}\prod_{j=1}^{N-k}\left(1+\frac{s}{N-j}\right)^{x-1}p_{k}^{x}+\left(1+\frac{N-1}{s}\right)^{x-1}p_{N}^{x}
\end{aligned}
\end{equation}
for $x\geq 1$, where $r\leq s \leq 1$ and $r=max\left\{\frac{hp_{h+1}}{p_{1}+\cdots+p_{h}}|~~h=1,\cdots,N-1\right\}.$ 
\end{lem}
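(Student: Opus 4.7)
The plan is to mirror the proof of Lemma \ref{l:chap2-lem2}, replacing the monogamy-direction estimate (Lemma \ref{l:chap2-lem1}) by its polygamy counterpart (Lemma \ref{l:chap3-lem1}), and to proceed by induction on $N$. Since Lemma \ref{l:chap3-lem1} reverses the direction of the inequality for $x \geq 1$, every step in the previous proof flips accordingly, and the algebraic bookkeeping of the product/sum structure on the right-hand side is unchanged.

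For the base case $N=1$ the claim reduces to the trivial identity $p_1^x \leq p_1^x$ (interpreting the empty product as $1$). For the inductive step I assume the lemma for $N-1$ and, given $p_1\geq p_2\geq\cdots\geq p_N>0$, factor out the smallest term
\begin{equation*}
\left(\sum_{i=1}^{N}p_{i}\right)^{x}=p_{N}^{x}\left(1+\frac{p_{1}+\cdots+p_{N-1}}{p_{N}}\right)^{x}.
\end{equation*}
Because the $p_i$ are in decreasing order, $t:=(p_{1}+\cdots+p_{N-1})/p_{N}\geq N-1=:a$, so Lemma \ref{l:chap3-lem1} applies with this choice of $a$ and gives
\begin{equation*}
p_{N}^{x}(1+t)^{x}\leq \left(1+\frac{N-1}{s}\right)^{x-1}p_{N}^{x}+\left(1+\frac{s}{N-1}\right)^{x-1}\left(p_{1}+\cdots+p_{N-1}\right)^{x},
\end{equation*}
provided $s\in\bigl[(N-1)p_{N}/(p_{1}+\cdots+p_{N-1}),\,1\bigr]$. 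Applying the inductive hypothesis to $(p_{1}+\cdots+p_{N-1})^{x}$ and multiplying through by the prefactor $(1+s/(N-1))^{x-1}$ then yields exactly the right-hand side of the claimed inequality.

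The only place that needs genuine care is checking that a single choice of parameter $s$ works uniformly across all nested applications of Lemma \ref{l:chap3-lem1}. At the $(h+1)$st level of the recursion the lemma requires $s\geq hp_{h+1}/(p_{1}+\cdots+p_{h})$, so one needs $s$ to dominate the maximum of these ratios over $h=1,\dots,N-1$; this is precisely the definition of $r$ in the statement, and the upper bound $s\leq 1$ is exactly the one in Lemma \ref{l:chap3-lem1}. Hence the range $r\leq s\leq 1$ is designed so that all of the inductive invocations are simultaneously legitimate.

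I expect this parameter-compatibility check to be the main (and only nontrivial) obstacle; the rest is a routine rerun of the $\alpha$th-power monogamy argument with inequalities reversed. No additional tools beyond Lemma \ref{l:chap3-lem1} and the standard induction framework of Lemma \ref{l:chap2-lem2} are needed.
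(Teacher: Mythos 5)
Your proof is correct and is exactly the argument the paper intends: the paper omits the proof of this lemma, presenting it as the analogue of Lemma \ref{l:chap2-lem2}, whose proof is precisely your induction (factor out $p_N$, apply the one-variable estimate with $a=N-1$, invoke the inductive hypothesis). Your explicit check that the single parameter $s$ with $r\leq s\leq 1$ validates all nested applications of Lemma \ref{l:chap3-lem1} is the right point to emphasize and matches the role of $r$ in the statement.
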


Similar to Theorem \ref{t:chap2-thm1}, we have the following conclusion by using Lemma \ref{l:chap3-added lem}.

%Similar to Theorem \ref{t:chap2-thm1}, we have the following conclusion obtained by an analogue of Lemma \ref{l:chap2-lem2}, which is derived from Lemma \ref{l:chap3-lem1}.

\begin{thm} \label{t:chap3-thm1}
Let $\mathcal{E}_{a}$ be a bipartite assisted entanglement measure satisfying the $\delta$th-polygamy relation \eqref{e:chap1-ineq2} and $\rho_{AB_{1}\cdots B_{N-1}}$ any $N$-qubit quantum state. Arrange $\{\mathcal{E}_{a_{i}}=\mathcal{E}_{aAB_{i'}}|i=1,\cdots,N-1\}$ in descending order. If $\mathcal{E}^{\delta}_{a_{i}}\geq \mathcal{E}^{\delta}_{a_{i+1}}>0$ for $i=1,\cdots,N-2$, then
\begin{align*}
\mathcal{E}^{\beta}_{aA|B_{1}\cdots B_{N-1}} \leq \sum_{k=1}^{N-2}\left(1+\frac{k-1}{s}\right)^{\frac{\beta}{\delta}-1}\prod_{j=2}^{N-k}\left(1+\frac{s}{N-j}\right)^{\frac{\beta}{\delta}-1}\mathcal{E}_{a_{k}}^{\beta}+\left(1+\frac{N-2}{s}\right)^{\frac{\beta}{\delta}-1}\mathcal{E}_{a_{N-1} }^{\beta}
\end{align*}
for $\beta\geq \delta$, where $\tilde{q}\leq s \leq 1$ and $\tilde{q}=max\left\{\frac{h\mathcal{E}^{\delta}_{a_{h+1}}}{\mathcal{E}^{\delta}_{a_{1}}+\cdots+\mathcal{E}^{\delta}_{a_{h}}}|~~h=1,2,\cdots,N-2\right\}.$
\end{thm}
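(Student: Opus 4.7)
The plan is to mirror the proof of Theorem \ref{t:chap2-thm1}, but to work in the convex regime $\beta/\delta\geq 1$ using Lemma \ref{l:chap3-added lem} in place of Lemma \ref{l:chap2-lem2}. The first step will be to invoke the hypothesized $\delta$th-polygamy relation \eqref{e:chap1-ineq2} to obtain
$$\mathcal{E}_a^{\delta}(\rho_{A|B_{1}\cdots B_{N-1}})\leq \sum_{i=1}^{N-1}\mathcal{E}_{a_i}^{\delta},$$
and then raise both sides to the power $\beta/\delta\geq 1$, which is a monotone operation on $[0,\infty)$. This reduces the theorem to an upper estimate for $\bigl(\sum_{i=1}^{N-1}\mathcal{E}_{a_i}^{\delta}\bigr)^{\beta/\delta}$.

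Next I will apply Lemma \ref{l:chap3-added lem} with the positive numbers $p_i=\mathcal{E}_{a_i}^{\delta}$ (decreasing by hypothesis), with $N$ of the lemma replaced by $N-1$, and with exponent $x=\beta/\delta$. After the substitution $j\mapsto j+1$ in the product (so that the index range $j=1,\ldots,N-1-k$ becomes $j=2,\ldots,N-k$ and the factor $1+\frac{s}{N-1-j}$ becomes $1+\frac{s}{N-j}$), the resulting expression is exactly the right-hand side claimed in the theorem, with the last term $(1+\tfrac{N-2}{s})^{\beta/\delta-1}\mathcal{E}_{a_{N-1}}^{\beta}$ coming from the $p_N^x$ summand of the lemma. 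The admissible range $\tilde q\leq s\leq 1$ coincides with the lemma's condition $r\leq s\leq 1$ once $p_i=\mathcal{E}_{a_i}^{\delta}$ is substituted, with the maximum ranging over $h=1,\ldots,N-2$.

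The only genuinely nontrivial ingredient is Lemma \ref{l:chap3-added lem} itself, stated just above the theorem as the convex analogue of Lemma \ref{l:chap2-lem2}. Its proof parallels that of Lemma \ref{l:chap2-lem2}: induct on $N$, peel off the smallest summand $p_N$ by writing
$$\left(\sum_{i=1}^{N}p_{i}\right)^{x}=p_{N}^{x}\left(1+\tfrac{p_{1}+\cdots+p_{N-1}}{p_{N}}\right)^{x},$$
and apply the reversed estimate of Lemma \ref{l:chap3-lem1} (valid for $x\geq 1$) with $t=(p_1+\cdots+p_{N-1})/p_N\geq N-1=a$. The main technical point is to check that the admissible $s$-range is preserved at each inductive layer, so that the single parameter $\tilde q$ correctly aggregates the constraints $\frac{(N-1)p_N}{p_1+\cdots+p_{N-1}}\leq s\leq 1$ arising at every stage; this works exactly as in the monogamy setting, since passing from $0\leq x\leq 1$ to $x\geq 1$ only flips the direction of the underlying binomial estimate without changing the admissible parameters.
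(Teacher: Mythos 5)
Your proposal is correct and follows exactly the route the paper intends: the paper states Theorem \ref{t:chap3-thm1} as following "similar to Theorem \ref{t:chap2-thm1}" by applying the $\delta$th-polygamy relation and then Lemma \ref{l:chap3-added lem} with $p_i=\mathcal{E}_{a_i}^{\delta}$ and $x=\beta/\delta\geq 1$, which is precisely your argument, including the correct index shift in the product and the identification of $\tilde q$ with the lemma's $r$. Your inductive sketch of Lemma \ref{l:chap3-added lem} via Lemma \ref{l:chap3-lem1} likewise mirrors the paper's proof of Lemma \ref{l:chap2-lem2}, so there is nothing to add.
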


{\bf Comparison of the polygamy relations for assisted entanglement measure $\mathcal{E}_{a}$}. Based on Theorem \ref{t:chap3-thm1} and Lemma \ref{l:chap3-lem1}, we have the following strong unified polygamy relations of $\beta$th $( \beta\geq \delta)$ power of $\mathcal{E}_{a}$.

\begin{align*}
\mathcal{E}^{\beta}_{aA|B_{1}\cdots B_{N-1}} & \leq \sum_{k=1}^{N-2}\left(1+\frac{k-1}{s}\right)^{\frac{\beta}{\delta}-1}\prod_{j=2}^{N-k}\left(1+\frac{s}{N-j}\right)^{\frac{\beta}{\delta}-1}\mathcal{E}_{a_{k}}^{\alpha}+\left(1+\frac{N-2}{s}\right)^{\frac{\alpha}{\delta}-1}\mathcal{E}_{a_{N-1} }^{\beta}\\
&\leq \sum_{k=1}^{N-2}k^{\frac{\beta}{\delta}-1}\prod_{j=2}^{N-k}\left(1+\frac{1}{N-j}\right)^{\frac{\beta}{\delta}-1}\mathcal{E}_{a_{k}}^{\alpha}+\left(N-1\right)^{\frac{\alpha}{\delta}-1}\mathcal{E}_{a_{N-1} }^{\beta}\\
&\leq \sum_{k=1}^{N-2}\prod_{j=2}^{N-k}\frac{(N-j+1)^{\frac{\beta}{\delta}}-1}{(N-j)^{\frac{\beta}{\delta}}}\mathcal{E}_{a_{k}}^{\beta}+\mathcal{E}_{a_{N-1} }^{\beta}\\
&\leq \sum_{k=1}^{N-1}(2^{\frac{\beta}{\delta}}-1)^{N-1-k}\mathcal{E}_{a_{k}}^{\beta}
\end{align*}
where $s, \tilde{q}$ are defined as in Theorem \ref{t:chap3-thm1}.
%$\tilde{q}\leq s \leq 1$ and $\tilde{q}=max\left\{\frac{h\mathcal{E}^{\delta}_{a(h+1)}}{\mathcal{E}^{\delta}_{a1}+\cdots+\mathcal{E}^{\delta}_{ah}}|~~h=1,2,\cdots,N-2\right\}.$

\bigskip
From Theorem \ref{t:chap3-thm1}, we can derive the following corollary.

\begin{cor} \label{c:chap3-cor1} Let $|\psi\rangle_{AB_{1}\cdots B_{N-1}}$ be any $N$-qubit pure state and $C_{a}$ the bipartite assisted quantum measure CoA satisfying the $2$nd-polygamy relation \eqref{e:chap1-ineq2}. Rename $C_{a_{i}}=C_{aAB_{i^{\prime}}}$ so that $C^{2}_{a_{i}}\geq C^{2}_{a_{i+1}}>0$ for $i=1,\cdots,N-2$, then for $\beta\geq 2$ we have
\begin{align*}
C^{\beta}_{a}(|\psi\rangle_{A|B_{1}\cdots B_{N-1}}) \leq \sum_{k=1}^{N-2}\left(1+\frac{k-1}{s}\right)^{\frac{\beta}{2}-1}\prod_{j=2}^{N-k}\left(1+\frac{s}{N-j}\right)^{\frac{\beta}{2}-1}C_{a_{k}}^{\beta}+\left(1+\frac{N-2}{s}\right)^{\frac{\beta}{2}-1}C_{a_{N-1} }^{\beta}.
\end{align*}
where $\tilde{q}\leq s \leq 1$ and $\tilde{q}=max\left\{\frac{hC_{a_{h+1}}^{2}}{C_{a_{1}}^{2}+\cdots+C_{a_{h}}^{2}}|~~h=1,2,\cdots,N-2\right\}.$
\end{cor}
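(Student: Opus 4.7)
The plan is to derive Corollary \ref{c:chap3-cor1} as an immediate specialization of Theorem \ref{t:chap3-thm1} to the concurrence of assistance. First, I would check that all the hypotheses of Theorem \ref{t:chap3-thm1} are met when $\mathcal{E}_{a}=C_{a}$ and $\delta=2$. The required ingredient is that $C_{a}$ satisfies the $2$nd-polygamy relation \eqref{e:chap1-ineq2} on $N$-qubit systems; this is exactly the Gour--Meyers--Sanders polygamy inequality \cite{GMS} for the squared concurrence of assistance, later extended to $N$-qubit pure states in \cite{GBS}, so the hypothesis is in force.

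Next, I would note that the descending arrangement required by Theorem \ref{t:chap3-thm1}, namely $\mathcal{E}^{\delta}_{a_{i}}\geq \mathcal{E}^{\delta}_{a_{i+1}}>0$, is precisely the assumption $C^{2}_{a_{i}}\geq C^{2}_{a_{i+1}}>0$ stipulated in the corollary. The condition $\beta\geq\delta$ becomes $\beta\geq 2$, and the auxiliary parameter $\tilde q$ from the theorem specializes to
\[
\tilde{q}=\max\Bigl\{\tfrac{hC_{a_{h+1}}^{2}}{C_{a_{1}}^{2}+\cdots+C_{a_{h}}^{2}}\Bigm|\,h=1,2,\cdots,N-2\Bigr\},
\]
which matches the formula in the corollary.

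Finally, I would substitute $\delta=2$ into the conclusion of Theorem \ref{t:chap3-thm1}: every exponent $\frac{\beta}{\delta}-1$ becomes $\frac{\beta}{2}-1$, and the measures $\mathcal{E}_{a_{k}}^{\beta}$ become $C_{a_{k}}^{\beta}$. This yields exactly the stated polygamy bound for $C^{\beta}_{a}(|\psi\rangle_{A|B_{1}\cdots B_{N-1}})$ for all $\beta\geq 2$ and all admissible $s\in[\tilde{q},1]$.

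There is no substantive obstacle: the proof is a direct invocation of Theorem \ref{t:chap3-thm1}, and the only non-trivial point is recalling that $C_{a}$ indeed obeys the $2$nd-polygamy relation on multi-qubit pure states, which was cited earlier in the introduction. Accordingly, the writeup of the proof can be compressed to a single sentence referring back to Theorem \ref{t:chap3-thm1}.
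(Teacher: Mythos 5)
Your proposal is correct and coincides with the paper's own treatment: the paper derives Corollary \ref{c:chap3-cor1} directly from Theorem \ref{t:chap3-thm1} by setting $\mathcal{E}_{a}=C_{a}$ and $\delta=2$, with the $2$nd-polygamy relation for CoA supplied by the cited literature. Nothing further is needed.
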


\bigskip
\begin{exmp}
Consider the following $4$-qubit generalized $W$-class state \cite{YCLZ}:
$$
|W\rangle_{AB_{1}B_{2}B_{3}}=\lambda_{1}(|1000\rangle+\lambda_{2}|0100\rangle)+\lambda_{3}|0010\rangle+\lambda_{4}|0001\rangle.
$$
 where $\sum_{i=1}^4 \lambda_i^2=1$, and $\lambda_{i}\geq 0$ for $i=1,2,3,4$.
Then \cite{YCLZ} implies that $C_{aAB_{1}}=2 \lambda_1\lambda_2 $, $C_{aAB_{2}}=2 \lambda_1\lambda_3 $, $C_{aAB_{3}}=2 \lambda_1\lambda_4 $. Set $\lambda_1=\frac{3}{4}, \lambda_2=\frac{1}{2}, \lambda_3=\frac{\sqrt{2}}{4}, \lambda_4=\frac{1}{4}$, we have
$C_{aAB_{1}}=\frac{3}{4}, C_{aAB_{2}}=\frac{3\sqrt{2}}{8}, C_{aAB_{3}}=\frac{3}{8}.$ Set $s = \frac{3}{5}$ (since $\tilde{q}\leq s \leq 1$ and $\tilde{q}=\frac{1}{2}$).
%(since $\tilde{q}\leq s \leq 1$ and $\tilde{q}=max\left\{\frac{C_{aAB_{2}}^{2}}{C_{aAB_{1}}^{2}}, \frac{2C_{aAB_{3}}^{2}}{C_{aAB_{1}}^{2}+C_{aAB_{2}}^{2}}\right\}=\frac{1}{2}$).

Therefore, by Corollary \ref{c:chap3-cor1}, for $\beta\geq 2$, our upper bound of the polygamy relation is
\begin{align*}
Y_1&=\left(1+\frac{2}{s}\right)^{\frac{\beta}{2}-1}C_{aAB_{3}}^{\beta}+\left(1+\frac{s}{2}\right)^{\frac{\beta}{2}-1}\left(1+\frac{1}{s}\right)^{\frac{\beta}{2}-1}C_{aAB_{2}}^{\beta}+\left(1+\frac{s}{2}\right)^{\frac{\beta}{2}-1}\left(1+\frac{s}{1}\right)^{\frac{\beta}{2}-1}C_{aAB_{1}}^{\beta}\\
&=\left(\frac{13}{3}\right)^{\frac{\beta}{2}-1}\left(\frac{3}{8}\right)^{\beta}+\left(\frac{52}{15}\right)^{\frac{\beta}{2}-1}\left(\frac{3\sqrt{2}}{8}\right)^{\beta}+\left(\frac{52}{25}\right)^{\frac{\beta}{2}-1}\left(\frac{3}{4}\right)^{\beta}.
\end{align*}

The upper bound $Y_2$ of the polygamy relation in
\cite[Lem. 3]{ZLJM} is a special case of our bound at $s=1$:
\begin{equation*}
     Y_2=3^{\frac{\beta}{2}-1}C_{aAB_{3}}^{\beta}+\left(\frac{3}{2}\right)^{\frac{\beta}{2}-1}2^{\frac{\beta}{2}-1}C_{aAB_{2}}^{\beta}+\left(\frac{3}{2}\right)^{\frac{\beta}{2}-1}2^{\frac{\beta}{2}-1}C_{aAB_{1}}^{\beta}=3^{\frac{\beta}{2}-1}\left(\left(\frac{3}{8}\right)^{\beta}+\left(\frac{3\sqrt{2}}{8}\right)^{\beta}+\left(\frac{3}{4}\right)^{\beta}\right).
\end{equation*}
%\begin{align*}
 %Y_2&=3^{\frac{\beta}{2}-1}C_{aAB_{3}}^{\beta}+\left(\frac{3}{2}\right)^{\frac{\beta}{2}-1}2^{\frac{\beta}{2}-1}C_{aAB_{2}}^{\beta}+\left(\frac{3}{2}\right)^{\frac{\beta}{2}-1}2^{\frac{\beta}{2}-1}C_{aAB_{1}}^{\beta}\\
%&=3^{\frac{\beta}{2}-1}\left(\left(\frac{3}{8}\right)^{\beta}+\left(\frac{3\sqrt{2}}{8}\right)^{\beta}+\left(\frac{3}{4}\right)^{\beta}\right).
%\end{align*}

The upper bound $Y_3$ of the polygamy relation from \cite[Lem. 2]{JFQ} is:
%\begin{align*}
%Y_3&=C_{aAB_{3}}^{\beta}+\frac{3^{\frac{\beta}{2}}-1}{2^{\frac{\beta}{2}}}C_{aAB_{2}}^{\beta}+\frac{3^{\frac{\beta}{2}}-1}{2^{\frac{\beta}{2}}}\left(2^{\frac{\beta}{2}}-1\right)C_{aAB_{1}}^{\beta}\\
%&=\left(\frac{3}{8}\right)^{\beta}+\frac{3^{\frac{\beta}{2}}-1}{2^{\frac{\beta}{2}}}\left(\frac{3\sqrt{2}}{8}\right)^{\beta}+\frac{3^{\frac{\beta}{2}}-1}{2^{\frac{\beta}{2}}}\left(2^{\frac{\beta}{2}}-1\right)\left(\frac{3}{4}\right)^{\beta}.
%\end{align*}
\begin{equation*}
    Y_3=C_{aAB_{3}}^{\beta}+\frac{3^{\frac{\beta}{2}}-1}{2^{\frac{\beta}{2}}}C_{aAB_{2}}^{\beta}+\frac{3^{\frac{\beta}{2}}-1}{2^{\frac{\beta}{2}}}\left(2^{\frac{\beta}{2}}-1\right)C_{aAB_{1}}^{\beta}=\left(\frac{3}{8}\right)^{\beta}+\frac{3^{\frac{\beta}{2}}-1}{2^{\frac{\beta}{2}}}\left(\frac{3\sqrt{2}}{8}\right)^{\beta}+\frac{3^{\frac{\beta}{2}}-1}{2^{\frac{\beta}{2}}}\left(2^{\frac{\beta}{2}}-1\right)\left(\frac{3}{4}\right)^{\beta}.
\end{equation*}

The upper bound $Y_{_4}$ of the polygamy relation which can be obtained using $(1+t)^{x}\leq 1+(2^{x}-1)$ ($t,x\geq 1$) from \cite{ZF3} is:
\begin{equation*}
    Y_{_4}=C_{aAB_{3}}^{\beta}+\left(2^{\frac{\beta}{2}}-1\right)C_{aAB_{2}}^{\beta}+\left(2^{\frac{\beta}{2}}-1\right)^{2}C_{aAB_{1}}^{\beta}=\left(\frac{3}{8}\right)^{\beta}+\left(2^{\frac{\beta}{2}}-1\right)\left(\frac{3\sqrt{2}}{8}\right)^{\beta}+\left(2^{\frac{\beta}{2}}-1\right)^{2}\left(\frac{3}{4}\right)^{\beta}.
\end{equation*}
% \begin{align*}
 % Y_{_4}&=C_{aAB_{3}}^{\beta}+\left(2^{\frac{\beta}{2}}-1\right)C_{aAB_{2}}^{\beta}+\left(2^{\frac{\beta}{2}}-1\right)^{2}C_{aAB_{1}}^{\beta}\\
 %&=\left(\frac{3}{8}\right)^{\beta}+\left(2^{\frac{\beta}{2}}-1\right)\left(\frac{3\sqrt{2}}{8}\right)^{\beta}+\left(2^{\frac{\beta}{2}}-1\right)^{2}\left(\frac{3}{4}\right)^{\beta}.
%\end{align*}
\begin{figure}[H]
\caption{Comparison of Polygamy Relations I}
    \centering
    \includegraphics[width=0.75\textwidth]{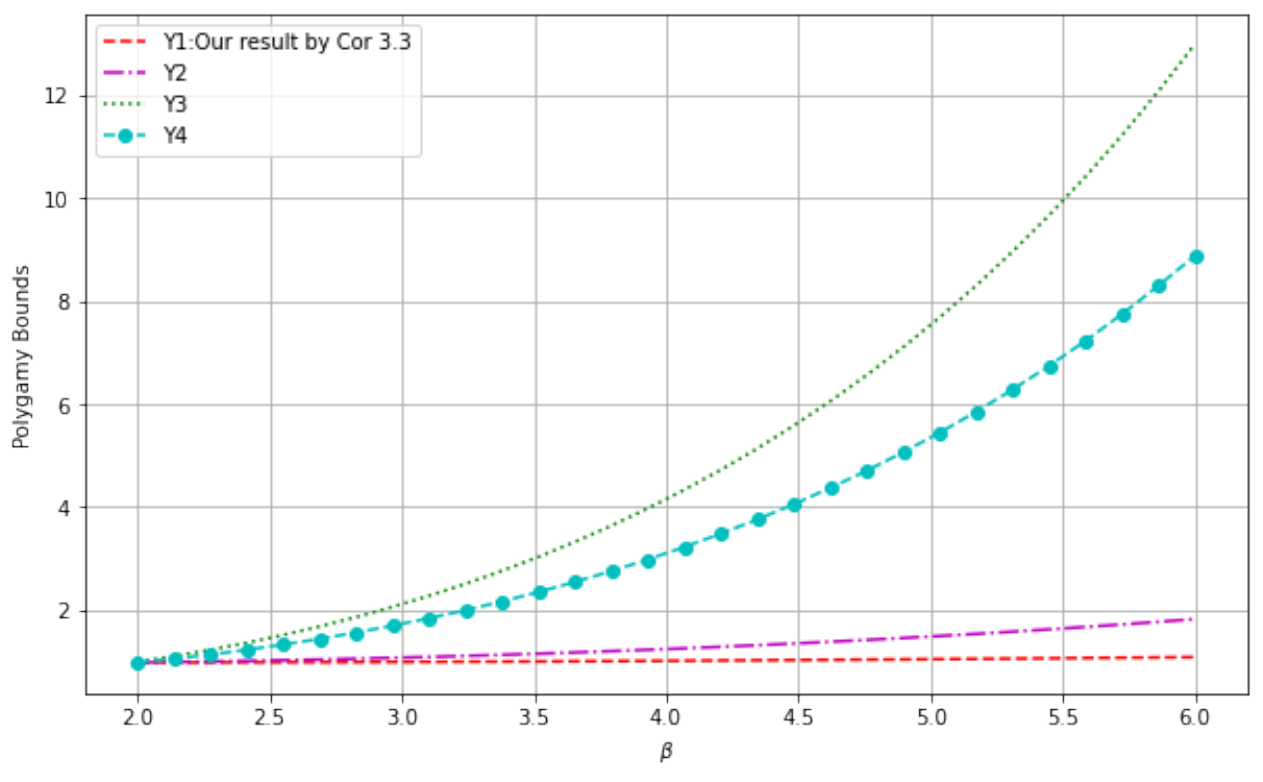}
    \captionsetup{justification=raggedright} 
    \caption*{The graphs of $Y_1, Y_2, Y_3$ and $Y_4$ (cf. the legend)
are shown in Figure 3 from bottom to top, which shows
that our bound $Y_1$ from Cor. \ref{c:chap3-cor1} is the lowest compared with those
from \cite[Lem. 3]{ZLJM}, \cite[Lem. 2]{JFQ} and \cite{ZF3} respectively.
} 
\end{figure}
\begin{figure}[H]
\caption{Comparison of polygamy relations II}
    \centering
    \includegraphics[width=1.0\textwidth]{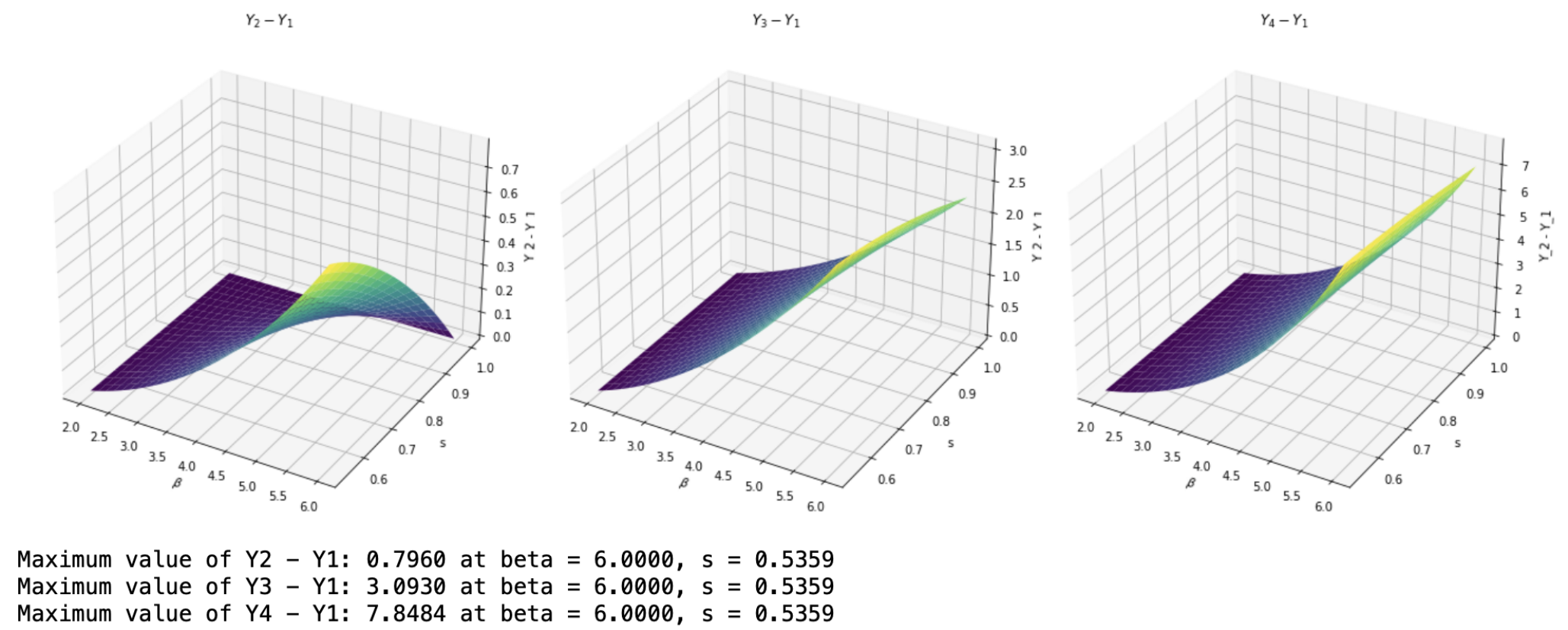}
    \captionsetup{justification=raggedright} 
    \caption*{We also added pictures of their differences in Fig. 4, which shows that the upper bound $Y_1$ performs best. The maxima of the differences are also marked.
} 
\end{figure}
\end{exmp}

\bigskip
\section{\textbf{Tighter $\omega$th $(0\leq \omega \leq \delta)$ power polygamy relations of assisted entanglement }}\label{s:polygamy2}

In this section, we present a class of tight polygamy inequalities of $\omega$th $(0\leq \omega \leq \delta)$ power of assisted entanglement measures $\mathcal{E}_{a}$ based on Eq. \eqref{e:chap1-ineq2} in a unified manner.
Then we use this information to derive
bounds for the $\omega$th ($0\leq \omega \leq 2$) power of concurrence for any $N$-qubit pure states $|\psi\rangle_{AB_{1}\cdots B_{N-1}}$ under the partition $AB_{1}$ and $B_{2}\cdots B_{N-1}$, which would give some
estimate of the linear entropy.

We remark that our treatment works for an arbitrary measure.

\subsection{\textbf{ The $\omega$th $(0\leq \omega \leq \delta)$ power polygamy relations}}

First, we need the following lemmas.

\begin{lem}\label{l:chap4-lem1}
Let $t\geq m\geq1$ and $0\leq x \leq 1$, then
\begin{equation}\label{e:chap4-ineq1}
\begin{aligned}
(1+t)^{x}\leq t^{x}+(1+m)^{x}-m^{x}+\frac{xm^{2}}{(1+m)^{2}}\left(t^{x-1}-m^{x-1}\right).
\end{aligned}
\end{equation}
\end{lem}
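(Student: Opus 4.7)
The plan is to reduce the inequality \eqref{e:chap4-ineq1} to a one-variable monotonicity problem. Fix $m\geq 1$ and $x\in[0,1]$, and define
$$F(t) = t^{x} + (1+m)^{x} - m^{x} + \frac{xm^{2}}{(1+m)^{2}}\bigl(t^{x-1}-m^{x-1}\bigr) - (1+t)^{x}.$$
A direct substitution gives $F(m)=0$, so the claim is equivalent to $F'(t)\geq 0$ for all $t\geq m$. This step trades the two-parameter inequality with an initially opaque correction term $xm^{2}/(1+m)^{2}$ for a single monotonicity check, and along the way explains the tangent-like form of the bound: the coefficient is exactly what is needed so that $F(m)=0$.

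Computing the derivative gives $F'(t) = x\bigl[t^{x-1}-(1+t)^{x-1}\bigr] - \dfrac{x(1-x)m^{2}}{(1+m)^{2}}\,t^{x-2}$. The boundary cases $x=0$ and $x=1$ are immediate (both sides of $F'\geq 0$ vanish), so I would assume $x\in(0,1)$ and need
$$t^{x-1}-(1+t)^{x-1} \;\geq\; \frac{(1-x)m^{2}}{(1+m)^{2}}\,t^{x-2}.$$
To estimate the left side from below I would apply the mean value theorem to $u\mapsto u^{x-1}$ on $[t,1+t]$, producing $t^{x-1}-(1+t)^{x-1} = (1-x)\xi^{x-2}$ for some $\xi\in(t,1+t)$. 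Since $x-2<0$ and $\xi<1+t$, one has $\xi^{x-2}>(1+t)^{x-2}$, yielding the clean lower bound $(1-x)(1+t)^{x-2}$. This is where one must be careful about the direction of the MVT estimate: one must compare $\xi$ against the upper endpoint $1+t$ so that the negative exponent $x-2$ turns the comparison into a \emph{lower} bound.

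After the $(1-x)$ factors cancel, the problem reduces to showing $\bigl((1+t)/t\bigr)^{x-2}\geq \bigl(m/(1+m)\bigr)^{2}$. Since $\psi(t)=(1+t)/t=1+1/t$ is decreasing, for $t\geq m$ one has $\psi(t)\leq (1+m)/m$, and because $x-2<0$ this gives $\psi(t)^{x-2}\geq \bigl((1+m)/m\bigr)^{x-2}$. Finally, $(1+m)/m>1$ forces $\bigl((1+m)/m\bigr)^{x-2}\geq \bigl((1+m)/m\bigr)^{-2}=\bigl(m/(1+m)\bigr)^{2}$ precisely when $x-2\geq -2$, i.e., $x\geq 0$, which is part of the hypothesis. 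I expect the only subtle point in the whole argument to be the correct orientation of the MVT bound; the remaining comparisons are routine manipulations of powers of the base $(1+m)/m$.
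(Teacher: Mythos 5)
Your proof is correct. The outer skeleton coincides with the paper's: the paper fixes $m$ and $x$, sets $f(x,y)=(1+y)^{x}-y^{x}-\frac{m^{2}}{(1+m)^{2}}xy^{x-1}$, and shows $f$ is decreasing in $y$ on $[m,+\infty)$ so that $f(x,t)\leq f(x,m)$; this is exactly your observation that $F(t)=f(x,m)-f(x,t)$ vanishes at $t=m$ and it suffices to prove $F'\geq 0$. Where you genuinely diverge is in establishing the sign of that derivative. The paper factors $\frac{\partial f}{\partial y}=xy^{x-1}h(x,y)$ with $h(x,y)=\left(1+\frac{1}{y}\right)^{x-1}-(x-1)\frac{m^{2}}{(1+m)^{2}}\frac{1}{y}-1$, differentiates a second time to show $h$ is increasing in $y$, and concludes $h\leq\lim_{y\to+\infty}h(x,y)=0$ --- a nested monotonicity argument anchored at infinity. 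You instead lower-bound $t^{x-1}-(1+t)^{x-1}$ by the mean value theorem, correctly comparing $\xi$ against the upper endpoint $1+t$ so that the negative exponent $x-2$ yields the lower bound $(1-x)(1+t)^{x-2}$, and then finish with two elementary comparisons of powers of $(1+t)/t$ and $(1+m)/m$. Your route avoids the second differentiation and the limiting argument, and it makes explicit where each hypothesis enters (in fact your chain only uses $m>0$, $t\geq m$ and $0\leq x\leq 1$, so the assumption $m\geq 1$ is not actually needed); the paper's version generalizes more mechanically to situations where an MVT bound is less clean. Both are complete proofs of the lemma.
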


\begin{proof}
Fix $m (\geq1)$, let
$f(x,y)=(1+y)^{x}-y^{x}-\frac{m^{2}}{(1+m)^{2}}xy^{x-1}$ defined on $(x,y)\in [0,1]\times[m,+\infty)$. Then $\frac{\partial f(x,y)}{\partial y}=xy^{x-1}\left(\left(1+\frac{1}{y}\right)^{x-1}-(x-1)\frac{m^{2}}{(1+m)^{2}}\frac{1}{y}-1\right).$ Let $h(x,y)=\left(1+\frac{1}{y}\right)^{x-1}-(x-1)\frac{m^{2}}{(1+m)^{2}}\frac{1}{y}-1$, $( y \geq m, 0\leq x\leq 1)$, then we have $\frac{\partial h(x,y)}{\partial y}=\frac{-1}{y^{2}}(x-1)\left[(1+\frac{1}{y})^{x-2}-\frac{m^{2}}{(1+m)^{2}}\right]\geq 0$. This means that for $y \geq m, 0\leq x\leq 1$, the function $h(x,y)$ is increasing  with respect to $y$. Subsequently, we have $h(x,y)\leq h(x,+\infty)=\lim_{y\rightarrow +\infty}h(x,y)=0.$ Therefore $\frac{\partial f(x,y)}{\partial y}\leq 0$, and
$f(x,y)$ is decreasing as a function of $y$. Thus $f(x,t)\leq f(x,m)$ for $t\geq m$, which is \eqref{e:chap4-ineq1}.
\end{proof}

Note that $t^{x-1}-m^{x-1}\leq 0$ for $t\geq m\geq 1$ and $0 \leq x\leq 1$, thus
\begin{equation}\label{e:chap4-ineq2}
\begin{aligned}
(1+t)^{x} &\leq t^{x}+(1+m)^{x}-m^{x}+\frac{xm^{2}}{(1+m)^{2}}\left(t^{x-1}-m^{x-1}\right)\\
&\leq t^{x}+(1+m)^{x}-m^{x}\leq t^{x}+2^{x}-1 \leq t^{x}+x \leq t^{x}+1.
\end{aligned}
\end{equation}

\begin{lem}\label{l:chap4-lem2} Let $p_i$ be $N$ positive numbers such that $p_{i}\geq  p_{i+1}(i=1,\cdots,N-1)$, then
\begin{equation}\label{e:chap4-ineq3}
\begin{aligned}
\left(\sum_{i=1}^{N}p_{i}\right)^{x}
&\leq p_{1}^{x}+(2^{x}-1)p_{2}^{x}+(3^{x}-2^{x})p_{3}^{x}+\cdots+\left[N^{x}-(N-1)^{x}\right]p_{N}^{x}\\
&+\frac{1}{2^{2}}x\left(\left(\frac{p_{1}}{p_{2}}\right)^{x-1}-1\right)+\frac{2^{2}}{3^{2}}xp_{3}^{x}\left(\left(\frac{p_{1}+p_{2}}{p_{3}}\right)^{x-1}-2^{x-1}\right)p_{2}^{x}\\
&+\frac{(N-1)^{2}}{N^{2}}x\left(\left(\frac{p_{1}+p_{2}+\cdots+p_{N-1}}{p_{N}}\right)^{x-1}-\left(N-1\right)^{x-1}\right)p_{N}^{x}.
\end{aligned}
\end{equation}
\end{lem}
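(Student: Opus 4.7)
The natural approach is induction on $N$, with Lemma \ref{l:chap4-lem1} supplying the estimate at each step. The base case $N=1$ reduces to the tautology $p_1^x\leq p_1^x$. For $N\geq 2$, suppose the inequality holds for $N-1$ positive numbers in decreasing order. Since $p_1\geq p_2\geq\cdots\geq p_N$, we have $p_1+\cdots+p_{N-1}\geq(N-1)p_N$, so setting $t=(p_1+\cdots+p_{N-1})/p_N$ gives $t\geq N-1$. Taking $m=N-1$ in Lemma \ref{l:chap4-lem1} is therefore legitimate, yielding
\begin{align*}
\Bigl(\sum_{i=1}^{N}p_{i}\Bigr)^{x} &= p_N^x(1+t)^x\\
&\leq p_N^x\Bigl[t^x+N^x-(N-1)^x+\tfrac{x(N-1)^2}{N^2}\bigl(t^{x-1}-(N-1)^{x-1}\bigr)\Bigr]\\
&=\Bigl(\sum_{i=1}^{N-1}p_i\Bigr)^x+\bigl(N^x-(N-1)^x\bigr)p_N^x+\tfrac{x(N-1)^2}{N^2}p_N^x\bigl(t^{x-1}-(N-1)^{x-1}\bigr),
\end{align*}
where $p_N^x\cdot t^x=(p_1+\cdots+p_{N-1})^x$ was used in the last equality.

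Invoking the inductive hypothesis on $(\sum_{i=1}^{N-1}p_i)^x$ supplies the earlier main terms $p_1^x+(2^x-1)p_2^x+\cdots+((N-1)^x-(N-2)^x)p_{N-1}^x$ together with the accumulated correction terms attached to $p_2^x,\dots,p_{N-1}^x$. Adding the two new pieces generated by the current step gives one more main summand $(N^x-(N-1)^x)p_N^x$ and one further correction summand, namely $\tfrac{x(N-1)^2}{N^2}p_N^x\bigl(t^{x-1}-(N-1)^{x-1}\bigr)$, which is precisely the last correction term displayed in \eqref{e:chap4-ineq3}. Iterating the argument shows that each $k\in\{2,\dots,N\}$ contributes a main term $(k^x-(k-1)^x)p_k^x$ and a correction term of the form $\tfrac{x(k-1)^2}{k^2}p_k^x\bigl((\tfrac{p_1+\cdots+p_{k-1}}{p_k})^{x-1}-(k-1)^{x-1}\bigr)$, which together reproduce the right-hand side of \eqref{e:chap4-ineq3}.

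The only real obstacle is to ensure the hypothesis $t\geq m\geq 1$ of Lemma \ref{l:chap4-lem1} persists at every recursive step. This follows automatically from the descending order of the $p_i$: at step $k$ one always has $(p_1+\cdots+p_{k-1})/p_k\geq k-1$, matching the choice $m=k-1$ (with $k-1\geq 1$ since $k\geq 2$). Beyond this structural check, the remainder is straightforward bookkeeping to verify that the coefficients $\tfrac{(k-1)^2}{k^2}x$ and the shifted ratios line up exactly with the right-hand side of \eqref{e:chap4-ineq3}.
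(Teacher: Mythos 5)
Your proposal is correct and takes essentially the same route as the paper: induction on $N$, applying Lemma \ref{l:chap4-lem1} with $t=(p_{1}+\cdots+p_{N-1})/p_{N}\geq N-1$ and $m=N-1$ at each step, then peeling off one main term and one correction term per iteration. Your bookkeeping also implicitly corrects the obvious typographical slips in the displayed right-hand side of \eqref{e:chap4-ineq3}, where the $p_{2}^{x}$ and $p_{3}^{x}$ factors are misplaced in the first two correction terms.
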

\begin{proof} We use induction on $N$. The case of $N = 1$ is trivial. Assume Eq. \eqref{e:chap4-ineq3} holds for $<N$. Now
consider $N$ decreasing positive numbers $p_1\geq p_2\geq \ldots\geq p_N > 0$.
Setting $t=\frac{p_{1}+p_{2}+\cdots+p_{N-1}}{p_{N}}\geq N-1$, Lemma \ref{l:chap4-lem1} implies that
\begin{align*}
\left(\sum_{i=1}^{N}p_{i}\right)^{x}  &=\left(p_{1}+p_{2}+\cdots+p_{N}\right)^{x}=p_{N}^{x}\left(1+\frac{p_{1}+p_{2}+\cdots+p_{N-1}}{p_{N}}\right)^{x}\\
&\leq (p_{1}+p_{2}+\cdots+p_{N-1})^{x}+\left[N^{x}-(N-1)^{x}\right)]p_{N}^{x}\\
&+\frac{(N-1)^{2}}{N^{2}}x\left(\left(\frac{p_{1}+p_{2}+\cdots+p_{N-1}}{p_{N}}\right)^{x-1}-\left(N-1\right)^{x-1}\right)p_{N}^{x}.
\end{align*}
Thus, the inequality \eqref{e:chap4-ineq3} follows by induction.
\end{proof}

%Based on these inequalities, our main results are presented in this section.
According to Lemma \ref{l:chap4-lem2},  we get the following polygamy relations for any $N$-qubit quantum state $\rho_{AB_{1}\cdots B_{N-1}}$.

\begin{thm} \label{t:chap4-thm1}
Let $\mathcal{E}_{a}$ be a bipartite assisted quantum measure satisfying the $\delta$-polygamy relation \eqref{e:chap1-ineq2} and $\rho_{AB_{1}\cdots B_{N-1}}$  any $N$-qubit quantum state. Arrange \{$\mathcal{E}_{a_{i}}=\mathcal{E}_{aAB_{i'}}|i=1,\cdots,N-1\}$ in descending order so that
$\mathcal{E}^{\delta}_{a_{1}}\geq \mathcal{E}^{\delta}_{a_{2}}\geq \ldots \geq \mathcal{E}^{\delta}_{a_{N-1}}>0$, then
\begin{align*}
\mathcal{E}_{aA|B_{1}\cdots B_{N-1}}^{\omega}\leq \sum_{k=1}^{N-1}\left[k^{\frac{\omega}{\delta}}-(k-1)^{\frac{\omega}{\delta}}\right]\mathcal{E}_{a_{k}}^{\omega}+\frac{\omega}{\delta}\sum_{p=2}^{N-1}\frac{(p-1)^{2}}{p^{2}}\left[\tau_{p}^{\frac{\omega}{\delta}-1}-(p-1)^{\frac{\omega}{\delta}-1}\right]\mathcal{E}_{a_{p}}^{\omega}
\end{align*}
for $0\leq \omega \leq \delta$, where $\tau_{p}=\frac{\mathcal{E}_{a_{1}}^{\delta}+\cdots+\mathcal{E}_{a_{p-1}}^{\delta}}{\mathcal{E}_{a_{p}}^{\delta}}$, $p=2,\cdots,N-1$.
\end{thm}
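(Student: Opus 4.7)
The plan is to apply Lemma \ref{l:chap4-lem2} to the polygamy hypothesis at the rescaled exponent $x=\omega/\delta\in[0,1]$. First, by the $\delta$th-polygamy hypothesis \eqref{e:chap1-ineq2},
\[
\mathcal{E}_{aA|B_{1}\cdots B_{N-1}}^{\delta}\;\leq\;\sum_{i=1}^{N-1}\mathcal{E}_{a_i}^{\delta}.
\]
Since $\omega/\delta\in[0,1]$, the map $u\mapsto u^{\omega/\delta}$ is monotone increasing on $[0,\infty)$, so taking this power of both sides yields
\[
\mathcal{E}_{aA|B_{1}\cdots B_{N-1}}^{\omega}\;\leq\;\Bigl(\sum_{i=1}^{N-1}\mathcal{E}_{a_i}^{\delta}\Bigr)^{\omega/\delta}.
\]

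Next I would invoke Lemma \ref{l:chap4-lem2} with ``$N$'' replaced by $N-1$, $p_i=\mathcal{E}_{a_i}^{\delta}$ (for $i=1,\ldots,N-1$), and $x=\omega/\delta$. The ordering hypothesis $\mathcal{E}_{a_1}^{\delta}\geq\cdots\geq\mathcal{E}_{a_{N-1}}^{\delta}>0$ ensures that the $p_i$ are positive and decreasing as Lemma \ref{l:chap4-lem2} requires. Since $p_k^{x}=\mathcal{E}_{a_k}^{\omega}$, the leading sum produced by the lemma reads $\sum_{k=1}^{N-1}\bigl[k^{\omega/\delta}-(k-1)^{\omega/\delta}\bigr]\mathcal{E}_{a_k}^{\omega}$, and the residual terms collapse into $\tfrac{\omega}{\delta}\sum_{p=2}^{N-1}\tfrac{(p-1)^{2}}{p^{2}}\bigl[\tau_{p}^{\omega/\delta-1}-(p-1)^{\omega/\delta-1}\bigr]\mathcal{E}_{a_p}^{\omega}$, with $\tau_{p}=(\mathcal{E}_{a_1}^{\delta}+\cdots+\mathcal{E}_{a_{p-1}}^{\delta})/\mathcal{E}_{a_p}^{\delta}$. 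This matches the asserted right-hand side exactly.

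Honestly, the main obstacle has already been overcome in the buildup: the genuine content is in Lemma \ref{l:chap4-lem1} (the one-variable estimate for $(1+t)^{x}$, proved via monotonicity of an auxiliary function in $y$) and the inductive packaging in Lemma \ref{l:chap4-lem2}. Once those are in hand, Theorem \ref{t:chap4-thm1} is a direct substitution. The one book-keeping point I would flag is the verification that the hypotheses of Lemma \ref{l:chap4-lem1} carry over at each inductive step, namely $\tau_{p}\geq p-1\geq 1$; this is immediate from the descending order, since $\mathcal{E}_{a_j}^{\delta}\geq\mathcal{E}_{a_p}^{\delta}$ for $j<p$ gives $\mathcal{E}_{a_1}^{\delta}+\cdots+\mathcal{E}_{a_{p-1}}^{\delta}\geq(p-1)\mathcal{E}_{a_p}^{\delta}$. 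The positivity assumption $\mathcal{E}_{a_{N-1}}^{\delta}>0$ takes care of the degenerate denominators.
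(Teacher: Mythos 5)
Your proposal is correct and follows the paper's own argument exactly: apply the $\delta$th-polygamy relation, raise to the power $\omega/\delta\in[0,1]$ by monotonicity, and then invoke Lemma \ref{l:chap4-lem2} with $p_i=\mathcal{E}_{a_i}^{\delta}$ and $x=\omega/\delta$. Your added check that $\tau_p\geq p-1$ follows from the descending order is a worthwhile detail the paper leaves implicit.
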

\begin{proof} From the $\delta$th-polygamy relation \eqref{e:chap1-ineq2} and Lemma \ref{l:chap4-lem2} we have
\begin{align*}
&\mathcal{E}_{aA|B_{1}\cdots B_{N-1}}^{\omega}\leq (\mathcal{E}_{aAB_{1}}^{\delta}+\mathcal{E}_{aAB_{2}}^{\delta}+\cdots+\mathcal{E}_{aAB_{N-1}}^{\delta})^{\frac{\omega}{\delta}}= (\mathcal{E}_{a_{1}}^{\delta}+\mathcal{E}_{a_{2}}^{\delta}+\cdots+\mathcal{E}_{a_{N-1}}^{\delta})^{\frac{\omega}{\delta}}\\
&\leq \mathcal{E}_{a_{1}}^{\omega}+(2^{\frac{\omega}{\delta}}-1)\mathcal{E}_{a_{2}}^{\omega}+\cdots+\left[(N-1)^{\frac{\omega}{\delta}}-(N-2)^{\frac{\omega}{\delta}}\right]\mathcal{E}_{a_{N-1}}^{\omega}
+\frac{1}{2^{2}}\frac{\omega}{\delta}\left[\left(\frac{\mathcal{E}_{a_{1}}^{\delta}}{\mathcal{E}_{a_{2}}^{\delta}}\right)^{\frac{\omega}{\delta}-1}-1\right]\mathcal{E}_{a_{2}}^{\omega}\\
&+\cdots+\frac{(N-2)^{2}}{(N-1)^{2}}\frac{\omega}{\delta}\left[\left(\frac{\mathcal{E}_{a_{1}}^{\delta}+\cdots+\mathcal{E}_{a_{N-2}}^{\delta}}{\mathcal{E}_{a_{N-1}}^{\delta}}\right)^{\frac{\omega}{\delta}-1}-(N-2)^{\frac{\omega}{\delta}-1}\right]\mathcal{E}_{a_{N-1}}^{\omega}\\
&=\sum_{k=1}^{N-1}\left[k^{\frac{\omega}{\delta}}-(k-1)^{\frac{\omega}{\delta}}\right]\mathcal{E}_{a_{k}}^{\omega}+\frac{\omega}{\delta}\sum_{p=2}^{N-1}\frac{(p-1)^{2}}{p^{2}}\left[\tau_{p}^{\frac{\omega}{\delta}-1}-(p-1)^{\frac{\omega}{\delta}-1}\right]\mathcal{E}_{a_{p}}^{\omega}.
\end{align*}
\end{proof}

{\bf Comparison of the polygamy relations for assisted entanglement measure $\mathcal{E}_{a}$}.
Based on Theorem \ref{t:chap4-thm1} and Eq. \eqref{e:chap4-ineq2}, we obtain the following unified polygamy relations of $\omega$th $(0\leq \omega\leq \delta)$ power of $\mathcal{E}_{a}$.

\begin{align*}
\mathcal{E}_{aA|B_{1}\cdots B_{N-1}}^{\omega}&\leq \sum_{k=1}^{N-1}\left[k^{\frac{\omega}{\delta}}-(k-1)^{\frac{\omega}{\delta}}\right]\mathcal{E}_{a_{k}}^{\omega}+\frac{\omega}{\delta}\sum_{p=2}^{N-1}\frac{(p-1)^{2}}{p^{2}}\left[\tau_{p}^{\frac{\omega}{\delta}-1}-(p-1)^{\frac{\omega}{\delta}-1}\right]\mathcal{E}_{a_{p}}^{\omega}\\
&\leq \sum_{k=1}^{N-1}\left[k^{\frac{\omega}{\delta}}-(k-1)^{\frac{\omega}{\delta}}\right]\mathcal{E}_{a_{k}}^{\omega}\leq \mathcal{E}_{a_{1}}^{\omega}+\sum_{k=2}^{N-1}\left[2^{\frac{\omega}{\delta}}-1\right]\mathcal{E}_{a_{k}}^{\omega}\\
&\leq \mathcal{E}_{a_{1}}^{\omega}+ \sum_{k=2}^{N-1}\frac{\omega}{\delta}\mathcal{E}_{a_{k}}^{\omega}\leq \sum_{k=1}^{N-1}\mathcal{E}_{a_{k}}^{\omega}
\end{align*}
 where $\tau_{p}$ $(p=2,\cdots,N-1)$ are defined as in Theorem \ref{t:chap4-thm1}.

\bigskip

In view of the comparison, we have the following polygamy relations of $\omega$th $(0\leq \omega \leq 2)$ power of CoA.

\begin{cor} \label{c:chap4-cor1}
Let $|\psi\rangle_{AB_{1}\cdots B_{N-1}}$ be any $N$-qubit pure state and $C_{a}$ be bipartite assisted quantum measure CoA satisfying the $2$-polygamy relation \eqref{e:chap1-ineq2}. Arrange \{$C_{a_{i}}=C_{aAB_{i^{\prime}}}|i=1,\cdots,N-1\}$ in descending order such that $C^{2}_{a_{i}}\geq C^{2}_{a_{i+1}}>0$ for $i=1,\cdots,N-2$, then
\begin{align}
C^{\omega}(|\psi\rangle_{A|B_{1}\cdots B_{N-1}})&\leq \sum_{k=1}^{N-1}\left[k^{\frac{\omega}{2}}-(k-1)^{\frac{\omega}{2}}\right]C_{a_{k}}^{\omega}+\frac{\omega}{2}\sum_{v=2}^{N-1}\frac{(v-1)^{2}}{v^{2}}\left[\tau_{v}^{\frac{\omega}{2}-1}-(v-1)^{\frac{\omega}{2}-1}\right]C_{a_{v}}^{\omega} \label{e:chap4-ineq4}\\
&\leq C_{a_{1}}^{\omega}+\sum_{k=2}^{N-1}\left[k^{\frac{\omega}{2}}-(k-1)^{\frac{\omega}{2}}\right]C_{a_{k}}^{\omega} \quad(\text{by $\tau_{v}^{\frac{\omega}{2}-1}-(v-1)^{\frac{\omega}{2}-1}\leq 0$}) \label{e:chap4-ineq5}\\
&\leq C_{a_{1}}^{\omega}+\sum_{k=2}^{N-1}\left[2^{\frac{\omega}{2}}-1\right]C_{a_{k}}^{\omega} \quad(\text{by \cite[Eq. (9)]{JF2}}) \label{e:chap4-ineq6}\\
&\leq C_{a_{1}}^{\omega}+\sum_{k=2}^{N-1}\frac{\omega}{2}C_{a_{k}}^{\omega} \quad(\text{by \cite[Eq. (8)]{JFL}}) \label{e:chap4-ineq7}\\
&\leq \sum_{k=1}^{N-1}C_{a_{k}}^{\omega} \quad\quad\quad(\text{by \cite[Conjecture 2] {GBS}})   \label{e:chap4-ineq8}
\end{align}
for $0\leq \omega \leq 2$, where the first inequality \eqref{e:chap4-ineq4} follows by Theorem \ref{t:chap4-thm1}, and $\tau_{v}=\frac{C_{a_{1}}^{2}+\cdots+C_{a_{v-1}}^{2}}{C_{a_{v}}^{2}}$, $v=2,\cdots,N-1$.
\end{cor}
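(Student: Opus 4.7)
The plan is to treat the chain of five inequalities \eqref{e:chap4-ineq4}--\eqref{e:chap4-ineq8} one link at a time; only the first link carries any content from Theorem \ref{t:chap4-thm1}, while the remaining four are pointwise scalar inequalities on the exponent $\omega/2\in[0,1]$.

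First I would convert the left-hand side from concurrence to CoA. For a bipartite pure state, the unique pure-state decomposition realizes both the min and the max in \eqref{e:chap2-ineq4}, so $C(|\psi\rangle_{A|B_{1}\cdots B_{N-1}}) = C_{a}(|\psi\rangle_{A|B_{1}\cdots B_{N-1}})$ for the bipartition $A\,|\,B_{1}\cdots B_{N-1}$. Thus \eqref{e:chap4-ineq4} is Theorem \ref{t:chap4-thm1} specialized to $\mathcal{E}_{a}=C_{a}$ and $\delta=2$ (the $2$-polygamy of CoA being the result of \cite{GBS}).

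For \eqref{e:chap4-ineq5} I would show that the correction sum is non-positive. The ordering $C_{a_{1}}^{2}\geq \cdots \geq C_{a_{N-1}}^{2}$ gives $C_{a_{1}}^{2}+\cdots+C_{a_{v-1}}^{2}\geq (v-1)C_{a_{v}}^{2}$, i.e.\ $\tau_{v}\geq v-1$ for each $v=2,\ldots,N-1$. Since $\omega/2-1\leq 0$, the map $t\mapsto t^{\omega/2-1}$ is non-increasing on $(0,\infty)$, so $\tau_{v}^{\omega/2-1}-(v-1)^{\omega/2-1}\leq 0$, and dropping the entire correction sum only weakens the bound. For \eqref{e:chap4-ineq6}, concavity of $t\mapsto t^{\omega/2}$ on $[0,\infty)$ (valid since $\omega/2\in[0,1]$) gives $k^{\omega/2}-(k-1)^{\omega/2}\leq 2^{\omega/2}-1$ for $k\geq 2$, which is exactly \cite[Eq. (9)]{JF2}, applied termwise. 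Step \eqref{e:chap4-ineq7} uses $2^{t}-1\leq t$ for $t\in[0,1]$, i.e.\ convexity of $2^{t}$ together with the endpoint values at $t=0,1$, which is \cite[Eq. (8)]{JFL} with $t=\omega/2$. Finally \eqref{e:chap4-ineq8} is just $\omega/2\leq 1$ applied termwise, reproducing the polygamy form of \cite[Conjecture 2]{GBS}.

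The only step requiring real care is \eqref{e:chap4-ineq5}: one must notice that the decreasing-order hypothesis on $C_{a_{i}}^{2}$ is precisely what forces $\tau_{v}\geq v-1$, and that $\omega/2-1\leq 0$ is what reverses the direction of the power inequality so that the bracketed quantity is non-positive rather than non-negative. Everything else is mechanical once Theorem \ref{t:chap4-thm1} (hence Lemmas \ref{l:chap4-lem1}--\ref{l:chap4-lem2}) and the cited scalar estimates from \cite{JF2,JFL,GBS} are in hand.
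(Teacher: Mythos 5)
Your proposal is correct and follows essentially the same route as the paper, which presents Corollary \ref{c:chap4-cor1} as the specialization of Theorem \ref{t:chap4-thm1} to $\mathcal{E}_a=C_a$, $\delta=2$, followed by the termwise scalar chain $\tau_v^{\omega/2-1}\leq (v-1)^{\omega/2-1}$, $k^{\omega/2}-(k-1)^{\omega/2}\leq 2^{\omega/2}-1\leq \omega/2\leq 1$ already laid out in the comparison paragraph after that theorem. Your explicit remark that $C=C_a$ for a pure state across the bipartition $A|B_1\cdots B_{N-1}$ (so the left-hand side may be bounded by the CoA polygamy relation) is a detail the paper leaves implicit, and is handled correctly.
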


Based on the above discussion, our polygamy relations of $\omega$th $(0\leq \omega \leq 2)$ power of CoA
seems to be a tight bound. %naturally tighter.

\subsection{\textbf{Estimates of $C^{\omega}(|\psi\rangle_{AB_{1}|B_{2}\cdots B_{N-1}})$ }} \label{s:optimized bound}
The linear entropy of a state $\rho$ is defined as \cite{SF}:
\begin{equation}\label{e:chap4-ineq9}
T(\rho)=\left[1-\operatorname{Tr}\left(\rho^2\right)\right]
\end{equation}

For a bipartite state $\rho_{A B}$, $T(\rho_{A B})$ has the property \cite{ZGZG}:
\begin{equation}\label{e:chap4-ineq10}
|T(\rho_{A})-T(\rho_{B})|\leq T(\rho_{AB})\leq T(\rho_{A})+T(\rho_{B}).
\end{equation}

\bigskip

For any $N$-qubit pure state $|\psi\rangle_{AB_{1}B_{2}\cdots B_{N-1}}$, it follows from \eqref{e:chap2-ineq3} and \eqref{e:chap4-ineq9} that
\begin{equation}\label{e:chap4-ineq11}
 C^{2}(|\psi\rangle_{AB_{1}|B_{2}\cdots B_{N-1}})=2\left[1-\operatorname{Tr}\left(\rho_{AB_{1}}^2\right)\right]=2T(\rho_{AB_{1}}).
\end{equation}

\bigskip
Combining with Theorem \ref{t:chap4-thm1}, we can estimate
the range of the entropy $T(\rho)$ using information of $C(|\psi\rangle)$.
%we have the following theorem.
\begin{thm} \label{t:chap4-thm2}
For $0\leq \omega \leq 2$ and any $N$-qubit state $|\psi\rangle_{AB_{1}B_{2}\cdots B_{N-1}}$ $(N\geq4)$,

{\bf(1)} The lower bound for $C^{\omega}(|\psi\rangle_{AB_{1}|B_{2}\cdots B_{N-1}})=C^{\omega}(\psi)$ is as follows:
\begin{align*}
 C^{\omega}(\psi)\geq \max \left\{\left(\sum_{i=2}^{N-1}C^{2}_{AB_{i}}+C^{2}_{AB_{1}}\right)^{\frac{\omega}{2}}-\Xi_{B_{1}},\left(\sum_{i=2}^{N-1}C^{2}_{B_{1}B_{i}}+C^{2}_{AB_{1}}\right)^{\frac{\omega}{2}}-\Xi_{A}, 0\right\}
 \end{align*}

{\bf(2)} The upper bound for $C^{\omega}(\psi)$ is given by
%$C^{\omega}(|\psi\rangle_{AB_{1}|B_{2}\cdots B_{N-1}})$ is:
\begin{align*}
C^{\omega}(\psi)\leq \Xi_{A}+\Xi_{B_{1}}
   \end{align*}
where  ~~ $\Xi_{j}=\sum_{k=1}^{N-1}\left[k^{\frac{\omega}{2}}-(k-1)^{\frac{\omega}{2}}\right]C_{a_{k_{j}}}^{\omega}+\frac{\omega}{2}\sum_{v=2}^{N-1}\frac{(v-1)^{2}}{v^{2}}\left[\tau_{v_{j}}^{\frac{\omega}{2}-1}-(v-1)^{\frac{\omega}{2}-1}\right]C_{a_{v_{j}}}^{\omega}$
 and $\tau_{v_{j}}=\frac{C_{a_{1_{j}}}^{2}+\cdots+C_{a_{(v-1)_{j}}}^{2}}{C_{a_{v_{j}}}^{2}}$, $v=2,\cdots,N-1$, $j=A, B_{1}$.
\end{thm}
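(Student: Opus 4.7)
The strategy is to bridge the bipartite cut $AB_{1}|B_{2}\cdots B_{N-1}$ with the single-party cuts $A|B_{1}B_{2}\cdots B_{N-1}$ and $B_{1}|AB_{2}\cdots B_{N-1}$ via the linear entropy, and then invoke Corollary \ref{c:chap4-cor1} on each single-party cut. From \eqref{e:chap4-ineq11} applied to the three reduced states of the pure state $|\psi\rangle_{AB_{1}\cdots B_{N-1}}$, one has $C^{2}(|\psi\rangle_{AB_{1}|\cdots}) = 2T(\rho_{AB_{1}})$, $C^{2}(|\psi\rangle_{A|\cdots}) = 2T(\rho_{A})$, and $C^{2}(|\psi\rangle_{B_{1}|\cdots}) = 2T(\rho_{B_{1}})$. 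The linear entropy triangle inequality \eqref{e:chap4-ineq10} applied to $\rho_{AB_{1}}$ therefore translates into
\[
\bigl|C^{2}(|\psi\rangle_{A|\cdots}) - C^{2}(|\psi\rangle_{B_{1}|\cdots})\bigr| \leq C^{2}(|\psi\rangle_{AB_{1}|\cdots}) \leq C^{2}(|\psi\rangle_{A|\cdots}) + C^{2}(|\psi\rangle_{B_{1}|\cdots}).
\]

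For the upper bound, raise the right-hand inequality to the $\omega/2$ power. Since $\omega/2 \in [0,1]$, the map $t \mapsto t^{\omega/2}$ is subadditive on $[0,\infty)$, yielding
\[
C^{\omega}(|\psi\rangle_{AB_{1}|\cdots}) \leq C^{\omega}(|\psi\rangle_{A|\cdots}) + C^{\omega}(|\psi\rangle_{B_{1}|\cdots}).
\]
Since $|\psi\rangle$ is pure, $C(|\psi\rangle_{A|\cdots}) = C_{a}(|\psi\rangle_{A|\cdots})$, so Corollary \ref{c:chap4-cor1} applied to the $A|\cdots$ cut gives $C^{\omega}(|\psi\rangle_{A|\cdots}) \leq \Xi_{A}$; symmetrically $C^{\omega}(|\psi\rangle_{B_{1}|\cdots}) \leq \Xi_{B_{1}}$. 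Adding produces the claimed $C^{\omega}(\psi) \leq \Xi_{A} + \Xi_{B_{1}}$.

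For the lower bound, the left-hand inequality above gives $C^{2}(|\psi\rangle_{AB_{1}|\cdots}) \geq C^{2}(|\psi\rangle_{A|\cdots}) - C^{2}(|\psi\rangle_{B_{1}|\cdots})$ whenever the right side is nonnegative. Apply the reverse form of subadditivity $(a-b)^{\omega/2} \geq a^{\omega/2} - b^{\omega/2}$, valid for $a \geq b \geq 0$ and $\omega/2 \in [0,1]$ (proved by rewriting $a^{\omega/2} = \bigl((a-b)+b\bigr)^{\omega/2} \leq (a-b)^{\omega/2} + b^{\omega/2}$), to deduce
\[
C^{\omega}(|\psi\rangle_{AB_{1}|\cdots}) \geq C^{\omega}(|\psi\rangle_{A|\cdots}) - C^{\omega}(|\psi\rangle_{B_{1}|\cdots}).
\]
The standard $\gamma = 2$ monogamy \eqref{e:chap1-ineq1} together with monotonicity of $t^{\omega/2}$ yields $C^{\omega}(|\psi\rangle_{A|\cdots}) \geq \bigl(\sum_{i=1}^{N-1} C^{2}_{AB_{i}}\bigr)^{\omega/2}$, and Corollary \ref{c:chap4-cor1} gives $C^{\omega}(|\psi\rangle_{B_{1}|\cdots}) \leq \Xi_{B_{1}}$. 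Combining yields the first argument of the maximum. Exchanging the roles of $A$ and $B_{1}$ gives the second argument, and the bound by $0$ is trivial.

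The main subtlety is the sign issue in the lower bound: the triangle inequality $|T(\rho_{A}) - T(\rho_{B_{1}})| \leq T(\rho_{AB_{1}})$ does not reveal which of $T(\rho_{A})$ or $T(\rho_{B_{1}})$ is larger, forcing a maximum over both orderings together with the trivial bound $0$. Both the forward subadditivity of $t^{\omega/2}$ used in the upper bound and its reverse form used in the lower bound hold precisely because $\omega/2 \leq 1$; this is the structural reason the statement is confined to $0 \leq \omega \leq 2$, and the argument breaks beyond $\omega = 2$.
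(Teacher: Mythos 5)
Your proposal is correct and follows essentially the same route as the paper's own proof: identifying $C^{2}$ across each cut with twice the linear entropy via Eq.~\eqref{e:chap4-ineq11}, applying the triangle inequality \eqref{e:chap4-ineq10}, using the subadditivity of $t\mapsto t^{\omega/2}$ and its reverse form (which the paper cites as the Lemma in \cite{JF2}), and finishing with the $\gamma=2$ monogamy relation \eqref{e:chap1-ineq1} and Corollary~\ref{c:chap4-cor1} to produce $\Xi_{A}$ and $\Xi_{B_{1}}$. Your explicit one-line derivation of the reverse subadditivity and your remark on why purity lets $C=C_{a}$ enter Corollary~\ref{c:chap4-cor1} are welcome clarifications of steps the paper leaves implicit.
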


\begin{proof} {\bf(1)}
If $C^{2}(|\psi\rangle_{A|B_{1}B_{2}\cdots B_{N-1}}) \leq C^{2}(|\psi\rangle_{B_{1}|AB_{2}\cdots B_{N-1}})$, then we have
\begin{align*}
& C^{\omega}(|\psi\rangle_{AB_{1}|B_{2}\cdots B_{N-1}})=\left(2T(\rho_{AB_{1}})\right)^{\frac{\omega}{2}}\geq |2T(\rho_{A})-2T(\rho_{B_{1}})|^{\frac{\omega}{2}}\quad (\text{by Eqs. \eqref{e:chap4-ineq10}, \eqref{e:chap4-ineq11}})\\
&=| C^{2}(|\psi\rangle_{A|B_{1}B_{2}\cdots B_{N-1}})- C^{2}(|\psi\rangle_{B_{1}|AB_{2}\cdots B_{N-1}})|^{\frac{\omega}{2}} \quad \quad(\text{by Eq. \eqref{e:chap2-ineq3}})\\
     &\geq C^{\omega}(|\psi\rangle_{B_{1}|AB_{2}\cdots B_{N-1}})- C^{\omega}(|\psi\rangle_{A|B_{1}B_{2}\cdots B_{N-1}})\quad \quad(\text{by Lemma in \cite{JF2}})\\
     &\geq \left(\sum_{i=2}^{N-1}C^{2}_{B_{1}B_{i}}+C^{2}_{AB_{1}}\right)^{\frac{\omega}{2}}- C^{\omega}(|\psi\rangle_{A|B_{1}B_{2}\cdots B_{N-1}}) \quad \quad(\text{by Eq. \eqref{e:chap1-ineq1} and $\gamma=2$ })\\
      &\geq \left(\sum_{i=2}^{N-1}C^{2}_{B_{1}B_{i}}+C^{2}_{AB_{1}}\right)^{\frac{\omega}{2}}- \Xi_{A},
   \end{align*}
where the last inequality is due to Eq. \eqref{e:chap4-ineq4}, and note that we renamed $C_{a_{i_{A}}}=C_{aAB_{i'}}$ $(i=1,\cdots,N-1)$ so that they are in descending order.

 \bigskip
Meanwhile, if $C^{2}(|\psi\rangle_{B_{1}|AB_{2}\cdots B_{N-1}}) \leq C^{2}(|\psi\rangle_{A|B_{1}B_{2}\cdots B_{N-1}})$, we then have
\begin{align*}
& C^{\omega}(|\psi\rangle_{AB_{1}|B_{2}\cdots B_{N-1}})\geq \left(\sum_{i=2}^{N-1}C^{2}_{B_{1}B_{i}}+C^{2}_{AB_{1}}\right)^{\frac{\omega}{2}}- C^{\omega}(|\psi\rangle_{B_{1}|AB_{2}\cdots B_{N-1}})
\end{align*}
By arguments similar to Theorem \ref{t:chap4-thm1}, we have
$
C^{\omega}(|\psi\rangle_{B_{1}|AB_{2}\cdots B_{N-1}}) \leq \Xi_{B_{1}}.
$

{\bf(2)} By the above discussion, we obtain
\begin{align*}
&C^{\omega}(|\psi\rangle_{AB_{1}|B_{2}\cdots B_{N-1}})=\left((2T(\rho_{AB_{1}})\right)^{\frac{\omega}{2}}\leq \left((2T(\rho_{A})+2T(\rho_{B_{1}})\right)^{\frac{\omega}{2}}\quad (\text{by Eqs. \eqref{e:chap4-ineq10}, \eqref{e:chap4-ineq11}})\\
&=\left( C^{2}(|\psi\rangle_{A|B_{1}B_{2}\cdots B_{N-1}})+ C^{2}(|\psi\rangle_{B_{1}|AB_{2}\cdots B_{N-1}})\right)^{\frac{\omega}{2}}\quad \quad(\text{by Eq. \eqref{e:chap2-ineq3}})\\
   &\leq C^{\omega}(|\psi\rangle_{A|B_{1}B_{2}\cdots B_{N-1}})+ C^{\omega}(|\psi\rangle_{B_{1}|AB_{2}\cdots B_{N-1}})\quad \quad(\text{by Lemma in \cite{JF2}})\\
  & \leq \Xi_{A}+\Xi_{B_{1}}
   \end{align*}
\end{proof}

We remark that the inequality $|x-y|^{\omega}\geq x^{\omega}-y^{\omega}$ is tight (cf. \cite[Lemma]{JF2}).

\medskip
 Combining Corollary \ref{c:chap4-cor1} with Theorem \ref{t:chap4-thm2}, we obtain superior bounds of
 the $\omega$th ($0\leq \omega \leq 2$) power of concurrence for any $N$-qubit pure states $|\psi\rangle_{AB_{1}\cdots B_{N-1}}$ under the partition $AB_{1}$ and $B_{2}\cdots B_{N-1}$.
 Now let us use an example from \cite{YCLZ} to show these
 bounds for the entropy and entanglement measure.
 %our bounds are the strongest among all known bounds.

\begin{exmp}
Consider the following $4$-qubit generalized $W$-class state \cite{YCLZ}:
$$
|W\rangle_{AB_{1}B_{2}B_{3}}=\lambda_{1}(|1000\rangle+\lambda_{2}|0100\rangle)+\lambda_{3}|0010\rangle+\lambda_{4}|0001\rangle.
$$
 where $\sum_{i=1}^4 \lambda_i^2=1$, and $\lambda_{i}\geq 0$ for $i=1,2,3,4$.
Then  $C(|W\rangle_{AB_{1}|B_{2}B_{3}})=2  \sqrt{(\lambda_1^2+\lambda_2^2)(\lambda_3^2+\lambda_4^2)}$, $C_{AB_{1}}=C_{aAB_{1}}=2 \lambda_1\lambda_2 $, $C_{AB_{2}}=C_{aAB_{2}}=2 \lambda_1\lambda_3 $, $C_{AB_{3}}=C_{aAB_{3}}=2 \lambda_1\lambda_4 $, $C_{B_{1}B_{2}}=C_{aB_{1}B_{2}}=2 \lambda_2\lambda_3 $, and $C_{B_{1}B_{3}}=C_{aB_{1}B_{3}}=2 \lambda_2\lambda_4 $. Setting $\lambda_1=\frac{3}{4}, \lambda_2=\frac{1}{2}, \lambda_3=\frac{\sqrt{2}}{4}, \lambda_4=\frac{1}{4},$  one has $\tau_{2_{A}}=\frac{C^{2}_{aAB_{1}}}{C^{2}_{aAB_{2}}}=2, \tau_{3_{A}}=\frac{C^{2}_{aAB_{1}}+C^{2}_{aAB_{2}}}{C^{2}_{aAB_{3}}}=6.$ Similarly,  $\tau_{2_{B_{1}}}=\frac{9}{2}, \tau_{3_{B_{1}}}=11.$

{\bf(1)} The comparison of lower bound for $C^{\omega}(|W\rangle_{AB_{1}|B_{2}B_{3}})$ $(0\leq \omega \leq 2)$:

Since $\Xi_{B_1}\leq \Xi_{A}$, Theorem \ref{t:chap4-thm2} {\bf(1)} implies that our lower bound is
\begin{align*}
Z_{1}=\left(C^{2}_{AB_{1}}+C^{2}_{AB_{2}}+C^{2}_{AB_{3}}\right)^{\frac{\omega}{2}}-\Xi_{B_{1}}=\left(\frac{63}{64}\right)^{\frac{\omega}{2}}-\Xi_{B_{1}}
\ \text{or $0$},
\end{align*}
 where
\begin{align*}
\Xi_{B_{1}}&=C^{\omega}_{aAB_{1}}+\left[\left(2^{\frac{\omega}{2}}-1\right)+\frac{\omega}{8}\left(\tau_{2_{B_{1}}}^{\frac{\omega}{2}-1}-1\right)\right]C^{\omega}_{aB_{1}B_{2}}\\
&+\left[\left(3^{\frac{\omega}{2}}-2^{\frac{\omega}{2}}\right)+\frac{2\omega}{9}\left(\tau_{3_{B_{1}}}^{\frac{\omega}{2}-1}-2^{\frac{\omega}{2}-1}\right)\right]C^{\omega}_{aB_{1}B_{3}}\\
&=\left(\frac{3}{4}\right)^{\omega}+\left[\left(2^{\frac{\omega}{2}}-1\right)+\frac{\omega}{8}\left(\left(\frac{9}{2}\right)^{\frac{\omega}{2}-1}-1\right)\right]\left(\frac{\sqrt{2}}{4}\right)^{\omega}\\
&+\left[\left(3^{\frac{\omega}{2}}-2^{\frac{\omega}{2}}\right)+\frac{2\omega}{9}\left(11^{\frac{\omega}{2}-1}-2^{\frac{\omega}{2}-1}\right)\right]\left(\frac{1}{4}\right)^{\omega}.
   \end{align*}
\bigskip
 The following lower bound is given by Eq. \eqref{e:chap4-ineq5},
\begin{align*}
Z_{2}&=\left(C^{2}_{AB_{1}}+C^{2}_{AB_{2}}+C^{2}_{AB_{3}}\right)^{\frac{\omega}{2}}-C^{\omega}_{aAB_{1}}-\left(2^{\frac{\omega}{2}}-1\right)C^{\omega}_{aB_{1}B_{2}}-\left(3^{\frac{\omega}{2}}-2^{\frac{\omega}{2}}\right)C^{\omega}_{aB_{1}B_{3}}\\
&=\left(\frac{63}{64}\right)^{\frac{\omega}{2}}-\left(\frac{3}{4}\right)^{\omega}-\left(2^{\frac{\omega}{2}}-1\right)\left(\frac{\sqrt{2}}{4}\right)^{\omega}-\left(3^{\frac{\omega}{2}}-2^{\frac{\omega}{2}}\right)\left(\frac{1}{4}\right)^{\omega}.
 \end{align*}

The following lower bound is given by Eq. \eqref{e:chap4-ineq6},
\begin{align*}
Z_{3}&=\left(C^{2}_{AB_{1}}+C^{2}_{AB_{2}}+C^{2}_{AB_{3}}\right)^{\frac{\omega}{2}}-C^{\omega}_{aAB_{1}}-\left(2^{\frac{\omega}{2}}-1\right)C^{\omega}_{aB_{1}B_{2}}-\left(2^{\frac{\omega}{2}}-1\right)C^{\omega}_{aB_{1}B_{3}}\\
&=\left(\frac{63}{64}\right)^{\frac{\omega}{2}}-\left(\frac{3}{4}\right)^{\omega}-\left(2^{\frac{\omega}{2}}-1\right)\left(\left(\frac{\sqrt{2}}{4}\right)^{\omega}+\left(\frac{1}{4}\right)^{\omega}\right).
 \end{align*}

The lower bound given by Eq. \eqref{e:chap4-ineq7} is,
\begin{align*}
Z_{4}&=\left(C^{2}_{AB_{1}}+C^{2}_{AB_{2}}+C^{2}_{AB_{3}}\right)^{\frac{\omega}{2}}-C^{\omega}_{aAB_{1}}-\frac{\omega}{2}C^{\omega}_{aB_{1}B_{2}}-\frac{\omega}{2}C^{\omega}_{aB_{1}B_{3}}\\
&=\left(\frac{63}{64}\right)^{\frac{\omega}{2}}-\left(\frac{3}{4}\right)^{\omega}-\frac{\omega}{2}\left(\left(\frac{\sqrt{2}}{4}\right)^{\omega}+\left(\frac{1}{4}\right)^{\omega}\right).
 \end{align*}

% The lower bound derived by Eq. \eqref{e:chap4-ineq8} is,
%\begin{align*}
%Z_{5}&=\left(C^{2}_{AB_{1}}+C^{2}_{AB_{2}}+C^{2}_{AB_{3}}\right)^{\frac{\omega}{2}}-C^{\omega}_{aAB_{1}}-C^{\omega}_{aB_{1}B_{2}}-C^{\omega}_{aB_{1}B_{3}}\\
%&=\left(\frac{63}{64}\right)^{\frac{\omega}{2}}-\left(\frac{3}{4}\right)^{\omega}-\left(\frac{\sqrt{2}}{4}\right)^{\omega}-\left(\frac{1}{4}\right)^{\omega}.
% \end{align*}
\begin{figure}[H]
\caption{Comparison of Lower Bounds I}
    \centering
    \includegraphics[width=0.8\textwidth]{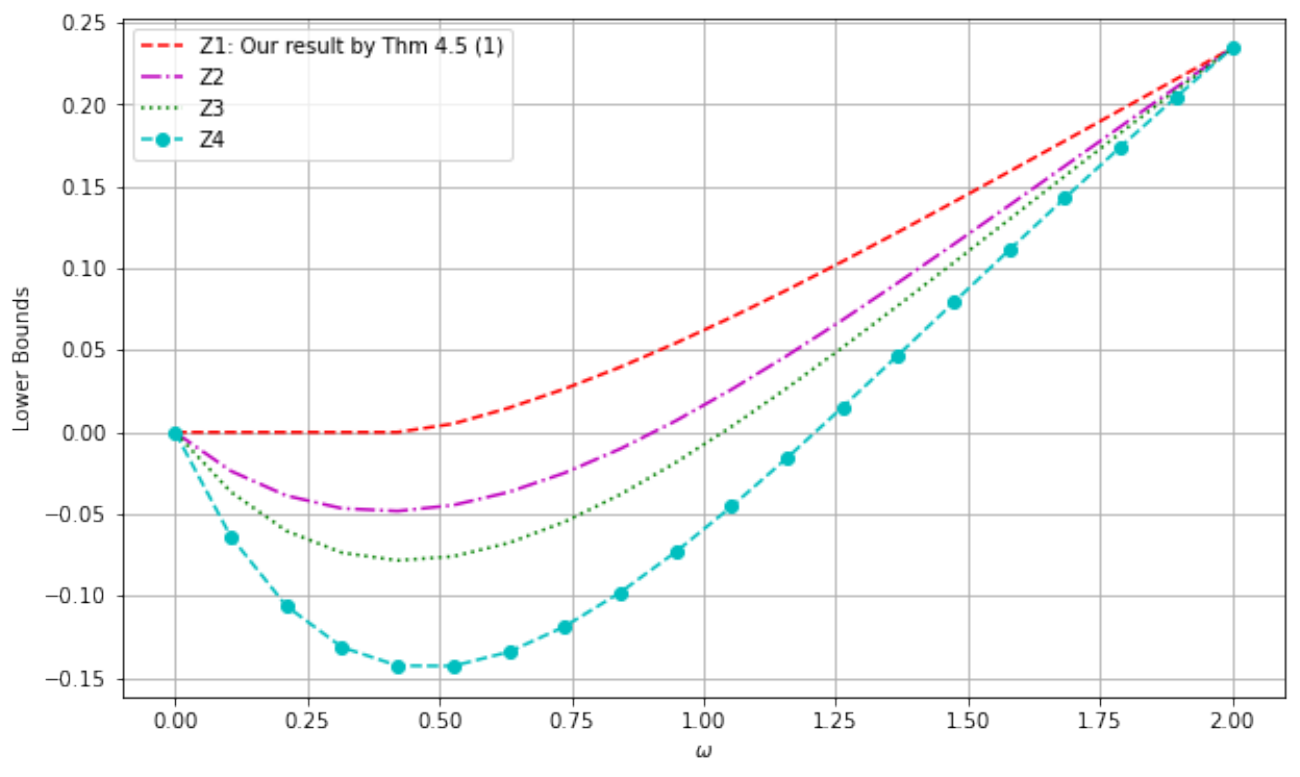}
     \captionsetup{justification=raggedright}
    \caption*{Figure 5 shows that among the lower bounds of the
$\omega$th power of $C(|W\rangle_{AB_1|B_2B_3})$
($0\leq \omega\leq 2$) the
bound $Z_1$ is the tightest one.}

\end{figure}

%Figure 3: The $x$-axis is $\omega$. From top to bottom, the dashed red line $Z_1$ represents the upper bound from our result in Theorem \ref{t:chap4-thm2} {\bf(1)}, the Dash-dot magenta line $Z_2$ represents the result in \eqref{e:chap4-ineq5}, the dotted green line $Z_3$ represents the result in \eqref{e:chap4-ineq6}, the dashed cyan line with circle markers $Z_4$ represents the result in \eqref{e:chap4-ineq7}, and the solid cyan line $Z_5$ represents the result in \eqref{e:chap4-ineq8}.The graph shows our upper bound $Z_1$ is the strongest.

\begin{figure}[H]
\caption{Comparison of Lower Bounds II}
    \centering
    \includegraphics[width=0.8\textwidth]{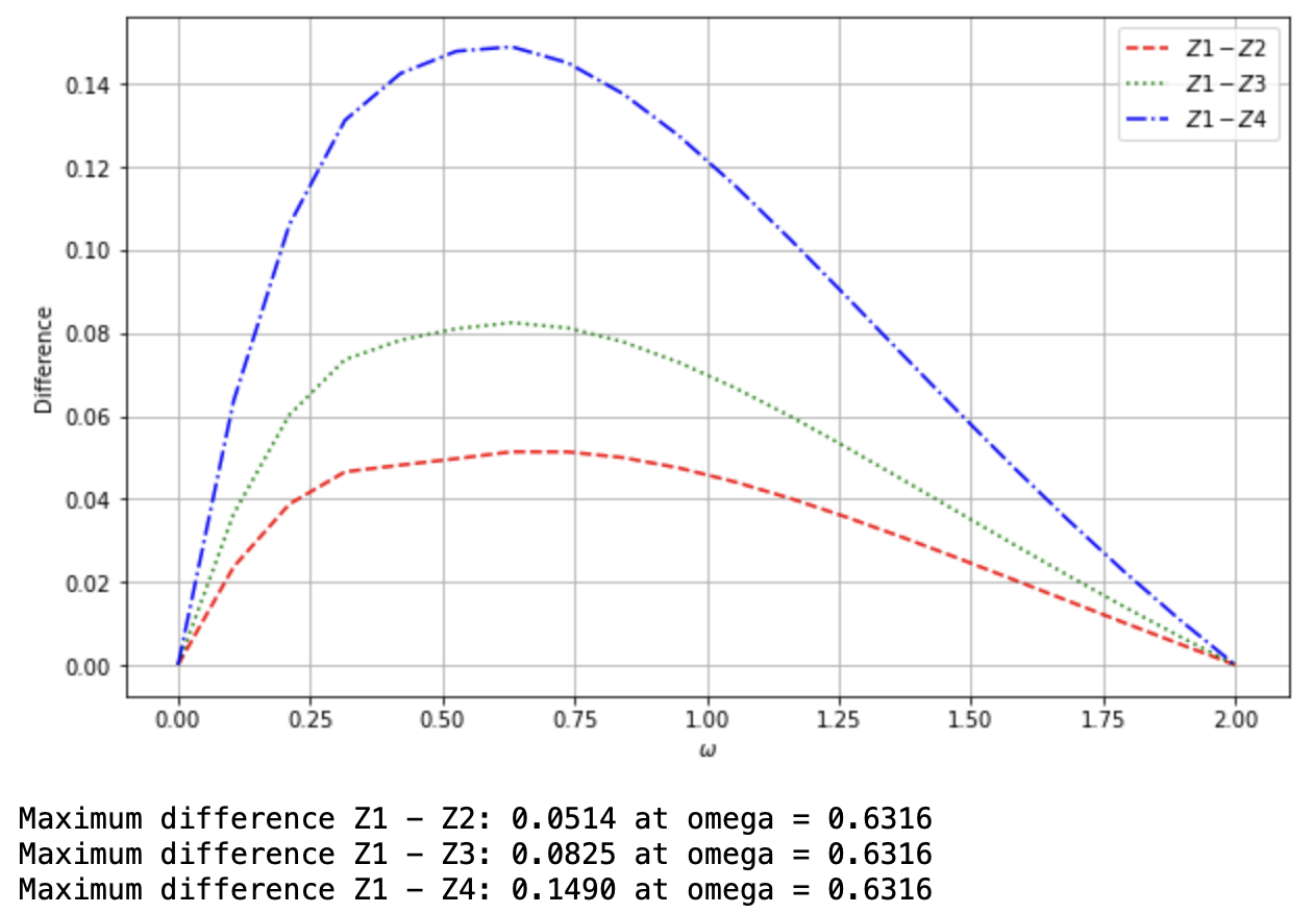}
  \caption*{Figure 6 pictures the differences and
  indicates the maxima of the differences.}  
\end{figure}

\bigskip
{\bf(2)} Comparison of upper bounds for $C^{\omega}(|W\rangle_{AB_{1}|B_{2}B_{3}})$ $(0\leq \omega \leq 2)$:

Theorem \ref{t:chap4-thm2} {\bf(2)} provides the upper bound
\begin{align*}
T_{1}=\Xi_{A}+\Xi_{B_{1}},
\end{align*}
 where $\Xi_{B_{1}}$ was calculated above, and
\begin{align*}
\Xi_{A}&=C^{\omega}_{aAB_{1}}+\left[\left(2^{\frac{\omega}{2}}-1\right)+\frac{\omega}{8}\left(\tau_{2_{A}}^{\frac{\omega}{2}-1}-1\right)\right]C^{\omega}_{aAB_{2}}\\
&+\left[\left(3^{\frac{\omega}{2}}-2^{\frac{\omega}{2}}\right)+\frac{2\omega}{9}\left(\tau_{3_{A}}^{\frac{\omega}{2}-1}-2^{\frac{\omega}{2}-1}\right)\right]C^{\omega}_{aAB_{3}}\\
&=\left(\frac{3}{4}\right)^{\omega}+\left[\left(2^{\frac{\omega}{2}}-1\right)+\frac{\omega}{8}\left(2^{\frac{\omega}{2}-1}-1\right)\right]\left(\frac{3\sqrt{2}}{8}\right)^{\omega}\\
&+\left[\left(3^{\frac{\omega}{2}}-2^{\frac{\omega}{2}}\right)+\frac{2\omega}{9}\left(6^{\frac{\omega}{2}-1}-2^{\frac{\omega}{2}-1}\right)\right]\left(\frac{3}{8}\right)^{\omega}
   \end{align*}

 The following upper bound is given by Eq. \eqref{e:chap4-ineq5},
\begin{align*}
T_{2}&=2C^{\omega}_{aAB_{1}}+\left(2^{\frac{\omega}{2}}-1\right)\left(C^{\omega}_{aAB_{2}}+C^{\omega}_{aB_{1}B_{2}}\right)+\left(3^{\frac{\omega}{2}}-2^{\frac{\omega}{2}}\right)\left(C^{\omega}_{aAB_{3}}+C^{\omega}_{aB_{1}B_{3}}\right)\\
&=2\left(\frac{3}{4}\right)^{\omega}+\left(2^{\frac{\omega}{2}}-1\right)\left(\left(\frac{3\sqrt{2}}{8}\right)^{\omega}+\left(\frac{\sqrt{2}}{4}\right)^{\omega}\right)+\left(3^{\frac{\omega}{2}}-2^{\frac{\omega}{2}}\right)\left(\left(\frac{3}{8}\right)^{\omega}+\left(\frac{1}{4}\right)^{\omega}\right).
 \end{align*}

The upper bound deduced by Eq. \eqref{e:chap4-ineq6} is,
\begin{align*}
T_{3}&=2C^{\omega}_{aAB_{1}}+\left(2^{\frac{\omega}{2}}-1\right)\left(C^{\omega}_{aAB_{2}}+C^{\omega}_{aB_{1}B_{2}}+C^{\omega}_{aAB_{3}}+C^{\omega}_{aB_{1}B_{3}}\right)\\
&=2\left(\frac{3}{4}\right)^{\omega}+\left(2^{\frac{\omega}{2}}-1\right)\left(\left(\frac{3\sqrt{2}}{8}\right)^{\omega}+\left(\frac{\sqrt{2}}{4}\right)^{\omega}+\left(\frac{3}{8}\right)^{\omega}+\left(\frac{1}{4}\right)^{\omega}\right).
 \end{align*}

The upper bound given by Eq. \eqref{e:chap4-ineq7} is,
\begin{align*}
T_{4}&=2C^{\omega}_{aAB_{1}}+\frac{\omega}{2}\left(C^{\omega}_{aAB_{2}}+C^{\omega}_{aB_{1}B_{2}}+C^{\omega}_{aAB_{3}}+C^{\omega}_{aB_{1}B_{3}}\right)\\
&=2\left(\frac{3}{4}\right)^{\omega}+\frac{\omega}{2}\left(\left(\frac{3\sqrt{2}}{8}\right)^{\omega}+\left(\frac{\sqrt{2}}{4}\right)^{\omega}+\left(\frac{3}{8}\right)^{\omega}+\left(\frac{1}{4}\right)^{\omega}\right).
 \end{align*}

 %The upper bound given by Eq. \eqref{e:chap4-ineq8} is,
%\begin{align*}
%T_{5}&=2C^{\omega}_{aAB_{1}}+C^{\omega}_{aAB_{2}}+C^{\omega}_{aB_{1}B_{2}}+C^{\omega}_{aAB_{3}}+C^{\omega}_{aB_{1}B_{3}}\\
%&=2\left(\frac{3}{4}\right)^{\omega}+\left(\frac{3\sqrt{2}}{8}\right)^{\omega}+\left(\frac{\sqrt{2}}{4}\right)^{\omega}+\left(\frac{3}{8}\right)^{\omega}+\left(\frac{1}{4}\right)^{\omega}.
 %\end{align*}
 \begin{figure}[H]
 \caption{Comparison of Upper Bounds I}
    \centering
    \includegraphics[width=0.8\textwidth]{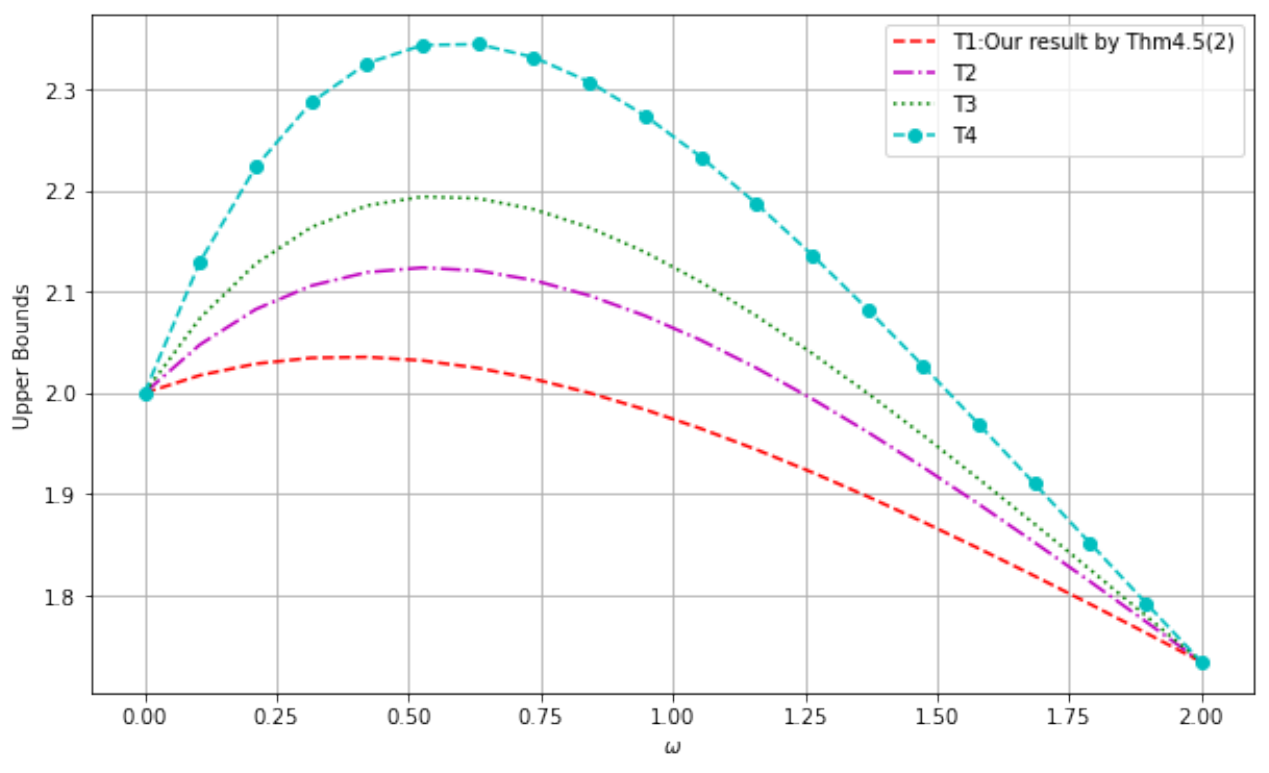}
    \captionsetup{justification=raggedright} 
    \caption*{Fig. 7 shows that among the upper bounds of the
$\omega$th power of $C(|W\rangle_{AB_1|B_2B_3})$
($0\leq \omega\leq 2$) the
bound $T_1$ is the tightest one. }
\end{figure}

%Figure 4: The $x$-axis is $\omega$. From bottom to top, the dashed red line $X_1$ represents the upper bound from our result in Theorem \ref{t:chap4-thm2} {\bf(2)}, the Dash-dot magenta line $X_2$ represents the result in \eqref{e:chap4-ineq5}, the dotted green line $X_3$ represents the result in \eqref{e:chap4-ineq6}, the dashed cyan line with circle markers $X_4$ represents the result in \eqref{e:chap4-ineq7}, and the solid cyan line $X_5$ represents the result in \eqref{e:chap4-ineq8}.The graph shows our upper bound $X_1$ is the strongest.

\begin{figure}[H]
\caption{Comparison of Upper Bounds II}
    \centering
    \includegraphics[width=0.8\textwidth]{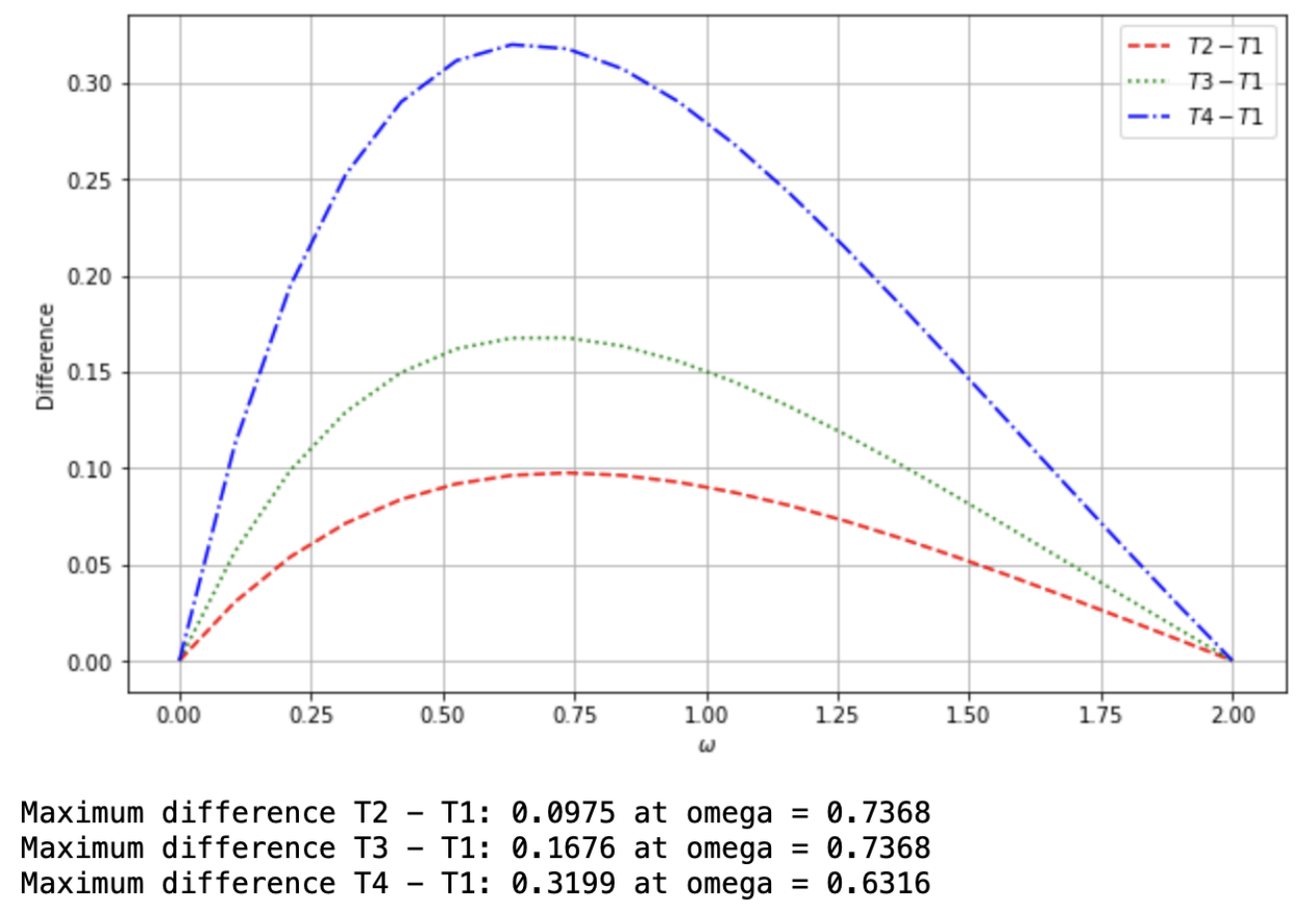}
    \captionsetup{justification=raggedright} 
    \caption*{Fig.8 shows their differences and the maxima are indicated in the description.}
\end{figure}
\end{exmp}

\section{\textbf{Conclusion}}

Various monogamy relations exist for different entanglement measures that are important in quantum information processing. Recently, we presented a family of tighter parameterized $\alpha$th-monogamy $(\alpha\geq\gamma)$ relations \cite{CJMW} based on Eq. \eqref {e:chap1-ineq1}. Therefore, there are three remaining cases that need to be discussed. Our goals in this work is to propose tighter monogamy relation for the $\alpha$th ($0\leq \alpha\leq \gamma$) power of $\mathcal{E}$ based on Eq. \eqref {e:chap1-ineq1}, as well as some good bounds for the $\beta$th ($ \beta \geq \delta$) power and $\omega$th ($0\leq \omega \leq \delta$) power of any bipartite assisted measure $\mathcal{E}_{a}$ based on Eq. \eqref {e:chap1-ineq2} in a unified manner. We discuss the monogamy and polygamy relations corresponding to these three cases respectively. It is noted that our treatment works for an arbitrary measurement.
These results are useful for exploring the entanglement theory, quantum information processing and secure quantum communication.

\bigskip

{\bf Data availability statement}. All data generated or
analyzed during this study are included in this published
article.

\bigskip

{\bf Declaration} The authors have no competing interests to declare that are relevant to the content of this article.

\bigskip

\bibliographystyle{plain}

\end{document}